\newtheorem{definition}{Definition}
\newtheorem{theorem}{Theorem}
\newtheorem{lemma}{Lemma}
\begin{document}
\title{Finite Length Analysis of Caching-Aided Coded Multicasting \footnote{A shorter version of this manuscript appeared in the 52nd Annual Allerton Conference on Communication, Control, and Computing (Allerton), 2014 as an invited paper \cite{Shanmugamallerton}.} }
\author{
\IEEEauthorblockN{Karthikeyan Shanmugam} \\
\IEEEauthorblockA{ Department of Electrical and Computer Engineering \\
University of Texas at Austin, USA \\
%Austin, TX 78712-1684 \\
\texttt{\{karthish\}@utexas.edu}}\\
\IEEEauthorblockN{Mingyue Ji} \\
\IEEEauthorblockA{ Department of Electrical Engineering \\
University of Souther California, USA\\
%Austin, TX 78712-1684 \\
\texttt{\{mingyuej\}@usc.edu}}\\
\IEEEauthorblockN{Antonia M.Tulino and Jaime Llorca} \\
\IEEEauthorblockA{ Alcatel Lucent- Bell Labs \\
Holmdel, USA\\
%Austin, TX 78712-1684 \\
\texttt{\{a.tulino,jaime.llorca\}@alcatel-lucent.com}}\\
\IEEEauthorblockN{Alexandros G. Dimakis} \\
\IEEEauthorblockA{ Department of Electrical and Computer Engineering \\
University of Texas at Austin, USA\\
%Austin, TX 78712-1684 \\
\texttt{\{dimakis\}@austin.utexas.edu}}
}

\date{\today}

\maketitle

\begin{abstract}
     In this work, we study a noiseless broadcast link serving $K$ users whose requests arise from a library of $N$ files. Every user is equipped with a cache of size $M$ files each. It has been shown that by splitting all the files into packets and placing individual packets in a random independent manner across all the caches, it requires at most $N/M$ file transmissions for any set of demands from the library. The achievable delivery scheme involves linearly combining packets of different files following a greedy clique cover solution to the underlying index coding problem. This remarkable multiplicative gain of random placement and coded delivery has been established in the asymptotic regime when the number of packets per file $F$ scales to infinity.
     
             In this work, we initiate the finite-length analysis of random caching schemes when the number of packets $F$ is a function of the system parameters $M,N,K$. Specifically, we show that existing random placement and clique cover delivery schemes that achieve optimality in the asymptotic regime can have at most a multiplicative gain of $2$ if the number of packets is  sub-exponential. Further, for any clique cover based coded delivery and a large class of random caching schemes, that includes the existing ones, we show that the number of packets required to get a multiplicative gain of $\frac{4}{3}g$ is at least $O((N/M)^g)$. We exhibit a random placement and an efficient clique cover based coded delivery scheme that approximately achieves this lower bound. We also provide tight concentration results that show that the average (over the random caching involved) number of transmissions concentrates very well requiring only polynomial number of packets in the rest of the parameters.
\end{abstract}

\begin{IEEEkeywords}
Coded multicasting; Caching; Index Coding; Clique-cover; Finite-length analysis.
\end{IEEEkeywords}

\section{Introduction}
 Wireless data traffic is increasing at an alarming pace dominated by video on demand services \cite{index2011global}, and the LTE bandwidth available has not increased to cope with the increasing demand. Recently, to tackle this problem, caching at the wireless edge has been proposed \cite{femto1,molisch2014caching}. Caching could take place at small cell/WiFi access points or at end user devices \cite{golrezaei2012base,ji2013wireless}, by prefetching popular content at off-peak periods to alleviate peak traffic later. It has been shown that in the presence of some form of communication between caches (e.g., device-to-device or D2D communications), caching gains proportional to the aggregate cache size can be obtained \cite{ji2013optimal,ji2014fundamental}. However, when there is no direct communication between caches, conventional caching schemes are limited to local cache hit gains.

 Consider that a set of demands for distinct files arriving at the base station (in a wireless macro cell setting) where each demand corresponds to some user mobile device in the cell. For simplicity, consider the case when user mobile stations are equipped with cache memory. Files stored in (say) user A's cache may or may not be relevant to that user's demand. However, it is possible that another user's demand is stored in the cache. This could benefit the number of files (or its equivalent in terms of bits) that the base station needs to broadcast to satisfy all demands if the cache content of every user is taken into account.  The abstract problem called \textit{index coding} tries to model the aspect of wireless caching systems that do not have local cache hits but their cache (or what is called \textit{side information}) overlaps with other users' demands. In an index coding problem, we have $K$ caching mobile devices served by a noiseless broadcast channel. Each caching device requests a distinct file that is not there in its cache. The broadcast transmissions can be a linear combination of files. Each user recovers its demand using the broadcast transmissions using its cache content. The objective is to find the minimum broadcast transmissions (termed as broadcast rate) given a set of demands and given cache content for each user. This problem is known to be extremely hard to compute and approximate even when linear combinations are done over the binary alphabet \cite{langberg2008hardness}. The problem can be represented as a side information graph where vertices are users and a directed edge means that one user's caching device has the some other users' desired packet as cached information. This problem has received a lot of attention in the information theory literature \cite{effros2012equivalence,bar2011index,maleki2012index} because it  $1$) encapsulates the difficulty of all network coding problems and $2$) any linear coding scheme for index coding is equivalent to a linear interference alignment scheme. We provide an example : User $1$ requests packet $1$ and User $2$ requests packet $2$ and each user has the other users' packet. Although there is no \textit{local} cache hit, the side information present at both users can be used to reduce the number of transmission by $1$ by transmitting the XOR of both packets.
 
   In another line of work, motivated by this 'index coding property' that allows using usersÕ side information to create coded multicast transmissions for users requesting different files, the problem of designing the side information is also considered. This problem is referred to as either the coded caching or the caching-aided coded multicasting problem. Hereafter, we refer to this simply as the caching problem. The setting is same as the index coding problem where there is a library of $N$ files from which user requests arise and every device has a memory of size $M$. The difference is that there is a \textit{placement phase}, which is free of cost, that involves populating all user caches with files from the library. The problem has been studied where order optimal peak broadcast rate for worst-case demand, order optimal average rate for uniform demand distribution and order optimal average rate for Zipf demand distribution for the demand have been characterized. However, all the achievable schemes work in the asymptotic regime when the number of packets per file scales to infinity. In this paper, we consider the case of peak rate over worst-case demand pattern. We show that existing algorithms for placement and delivery give very little gain even when the file sizes are exponentially large in the number of users, and derive lower bounds for a general class of random uncoordinated placement schemes and clique cover based delivery schemes. We also modify existing algorithms to approximately match these bounds. A detailed review of the caching problem is given below. 

\subsection{Related Work}

   In the caching problem, there is a common broadcasting agent serving $K$ users through a noiseless broadcast channel. Every user requests a file from a set of $N$ files. Each file consists of $F$ bits or packets. Every user has a cache of size $M$ files. Files or parts of it ('packets') are placed in every cache \textit{prior} to transmissions assuming that the library of file requests is known in advance. The objective is to design a placement scheme and delivery scheme that optimizes (or approximately optimizes) the maximum number of file transmissions required over all possible demand patterns. This problem has been well studied in the asymptotic regime when $F \rightarrow \infty$.
   
   A deterministic caching and delivery scheme which requires ${K \choose KM/N}$ packets per file to achieve a gain of $KM/N$ was proposed in \cite{maddah2013fundamental}. Following this, a random placement scheme that allows populating user caches independently of each other was proposed in \cite{maddah2013decentralized}. In this uncoordinated placement phase, every user caches $MF/N$ packets of each file $n \in [1:N]$ chosen uniformly at random and independently of other caches. The delivery scheme is a greedy clique cover on the side information graph induced by the underlying index coding problem (refer Section \ref{sec:Definitions}), where a set of packets of possibly different files are XORed if for all packets, at least one user desiring the file corresponding to the packet can recover the desired packet only by using its cache contents. For example if $A+B+C$ was sent, a user wanting $A$ could recover $A$ if the user has $B$ and $C$ stored in its user cache. The \textit{peak broadcast rate} (number of file transmissions) of this scheme was shown to be (in the limit $F \rightarrow \infty$): 
      \begin{equation}\label{eqn:peakexp}
          R_{p} (M) =  \frac{K \left(1- M/N \right)}{\left( KM/N \right)} \left(1- \left( 1-M/N \right)^K \right)
      \end{equation}     
  Here, $R_p(M)$ denotes the peak broadcast rate.  The peak rate means the worst case rate over all demand patterns of the $K$ users from the library. More precisely, this is the average peak rate because it is averaged over the randomness in caching. Note that, if coded multicasting is not used then the rate is given by $K \left(1-M/N \right)$ from the gain due to just local cache hits. It was shown through cut-set bounds that the result in (\ref{eqn:peakexp}) was optimal up to a constant factor.   
%When the user demand distribution is uniform over the library, the optimal average rate over the uniform demand distribution $R^{*}_{\mathrm{av}}(M) \leq R_p(M) $. Further, it was shown that for a universal constant $c$,
%  \begin{equation}\label{eqn:avgexp}
%     \frac{1}{c}R_p(M) \leq R^{*}_{\mathrm{av}}(M) \leq  R_p(M) 
%  \end{equation}
% This settled the question of average number of transmissions needed when the demand distribution is uniform. 
 The placement and delivery algorithms that achieve this average peak rate are given in Algorithms \ref{alg:OldPlacement} and \ref{alg:OldDelivery} respectively.
  
\begin{algorithm}
      \KwIn{Parameters $K,M,N$ and $F$.}
        \For {every user $k \in [1:K]$}
          { \For {every file $n \in [1:N]$}
             { Choose a random $MF/N$ subset of $F$ packets of file $n$ and place it in cache $k$.
             }      
          }
     \KwOut{ Cache configuration for every user $k \in [1:K]$.}     
   \caption{OldPlacement (Placement Algorithm in \cite{maddah2013decentralized})}
   \label{alg:OldPlacement}
   \end{algorithm}  
 \begin{algorithm}
      \KwIn{Parameters $K,M,N$ and $F$, caches for all users $k \in [1:K]$ and demand set $\mathbf{d}=\left[d_1,d_2 \ldots d_K \right]$.}
        \For {every subset ${\cal S} \subseteq [1:K]$}
           { Let $V_{k,{\cal S}-k}$ be the vector of packets from file requested by user $k$ but stored exactly in the set of caches ${\cal S}-k$.\\
             Transmit $\oplus_{k \in {\cal S}} V_{k,{\cal S}-k}$. 
           }     
   \caption{OldDelivery (Delivery Algorithm in \cite{maddah2013decentralized}). XORing ($\oplus$'ing) vectors of different lengths means that all shorter vectors are zero padded to match the longest and then XORed.}
   \label{alg:OldDelivery}
   \end{algorithm}

  This was followed by the works of \cite{niesen2013coded} and \cite{ji2014order} where they analyze the case of average number of transmissions when the user demand follows a popularity distribution over the library. Specifically, authors in \cite{ji2014order} consider the case in which file requests follow a Zipf popularity distribution. They provide caching and delivery schemes that achieve order optimal average number of transmissions in the asymptotic regime. The caching distribution, unlike in the worst-case, has to be designed with respect to the collective demand distribution. Interestingly, they also showed that for Zipf parameter between $0$ (uniform popularity) and $1$, even the peak rate scheme given above is sufficient for order optimality in the asymptotic regime $F \rightarrow \infty$. 
  
  \subsection{Our Contribution}
    We consider the caching problem with $K$ users, $N$ files in the library and a cache size of $M$ files. We are interested in the peak broadcast rate (number of file transmissions) for the worst-case demand. Our contributions are:
    \begin{enumerate}
       \item We first show that the existing random uncoordinated placement scheme (Algorithm \ref{alg:OldPlacement}) for this problem and its delivery scheme (Algorithm \ref{alg:OldDelivery}) has a rate above $ \frac{K \left(1-M/N \right)}{2}$ when $F \leq \frac{(N/M)}{K} \exp\left(KM/N \right)$. When compared to the asymptotic result, for a large asymptotic gain when $KN/M$ is $\Omega ((\log K)^2)$, the file size requirement is super-polynomial.
       \item We propose a slightly modified placement scheme (Algorithm \ref{alg:NewPlacement}). We show that the old delivery algorithm (Algorithm \ref{alg:OldDelivery}) coupled with the new placement scheme has similar file size requirements suggesting a needed change in the delivery scheme.
       \item We show that, under any random placement scheme which is independent and symmetric across users (every file packet placement in a user cache is independent of its placement in other caches, every file packet has equal marginal probability of being placed in a cache), any clique cover based scheme (using clique cover on the side-information graph) requires a file size of approximately $O( \frac{g}{K} (N/M)^{g-1})$ for achieving a peak average rate of $\frac{K}{\frac{4}{3}g} \left(1-M/N\right)$. Here, the average is over the random caching involved.
       \item  Since the studied placement schemes are random, it is important to consider the spread in performance due to this randomness. We show that the file size requirements for any clique cover scheme over both random placement schemes (old and the new) is polynomial for the average number of transmissions to concentrate for any demand pattern. It is sufficient to have a file size of $(O (K^3 \log K))$ for the random rate (over the randomness in caching) to be within a constant multiplicative factor from the mean.
       \item We finally exhibit a modified delivery scheme that improves on Algorithm \ref{alg:OldDelivery} through an extra pre-processing step. This modified delivery scheme applied with a specific user grouping along with the new placement scheme provably achieves a rate of roughly $\frac{4K}{3(g+1)}$ with a file size of $O( (\lceil N/M\rceil)^{g+1} (\log (N/M))^{g+2} (2e)^g )$ approximately matching the lower bound. The new placement scheme plays an important role in simplifying  the analysis of this algorithm.
           \end{enumerate}
           
   In Section \ref{sec:Definitions}, we provide the definitions of two random placement schemes (`old' placement scheme used in the literature and a `new' placement scheme) and a delivery scheme previously used in literature. In Section \ref{sec:conc}, for any clique cover scheme, we show that the file size requirements are only polynomial in $K$ for the normalized transmissions in both random placement schemes to concentrate well. In Section \ref{sec:filereq}, we show that the previous delivery scheme, that works asymptotically very well, gives only a constant gain (of $2$) even for exponentially large file sizes. We also show that any clique cover scheme with a random placement scheme that is `symmetric' requires exponential file size in the `target gain'.  For constant target gains, the file size requirement is polynomial in the ratio of library size to the cache memory size per user. In Section \ref{sec:effachiev}, to bridge the gap, we design an efficient clique cover scheme, which together with the new placement scheme, achieves the file size lower bound approximately orderwise.
           
    \section{Definitions and Algorithms} \label{sec:Definitions}
   
     We consider the problem of designing placement and delivery schemes when $K$ users request files from a library of $N$ files ($N>K$) and each user has a cache of size $M$ files. In the placement phase, a file is divided into $F$ packets/bits. Then each packet is placed in different user caches (randomly or deterministically). We are interested in an efficient placement scheme and an efficient delivery scheme consisting of coded XOR transmissions of various packets that optimizes the peak rate over worst-case demands. An efficient delivery scheme computes the coded transmissions needed in time polynomial in parameters $N,K,F,M$, while a placement scheme being efficient ensures $F$ is as small as possible.  Let us denote a set of demands by $\mathbf{d}=\left[ d_1,d_2 \ldots d_K \right],~d_k \in [1:N]$. A packet $f$ belonging to file $n \in [1:N]$ is denoted by $(n,f)$.   
    \begin{definition}
       After a placement scheme, cache configuration ${\cal C}$ is given by the family of sets ${\cal S}_{n,f}$ for all files $n$ and $1 \leq f \leq F$ where $S_{n,f} \subseteq [1:K]$ is the set of user caches in which packet (bit) $f$ of file $n$ is stored.
    \end{definition} 
      Every demand $\mathbf{d}$ and a cache configuration induces a directed side information graph $G=(V,E)$ where there are $KF$ nodes where $(d_k,f)$ is the label for each node representing the $f$-th packet of file $d_k$. There is a directed edge from $(d_k,f)$ to $(d_j,f')$ if the file packet $f'$ of file $d_j$ is stored in the user cache $k$.
      
    \begin{definition}
       A clique cover delivery scheme corresponds to covering nodes of $G$ by cliques. A clique is a set of vertices where there are edges in either direction between all vertices.
    \end{definition} 
   It is easy to see that, XORing all the packets in the clique formed by $(d_{k_1},f_1),(d_{k_2},f_2) \ldots (d_{k_m},f_m)$ implies that user $k_j, 1 \leq j \leq m$ will be able to decode the packet $(d_{k_j},f_j)$ by using all other packets in the XOR from its cache. Note that, here we do not require the demands to be distinct.  

    Let $R^{A}({\cal C},\mathbf{d})$ be the number of normalized transmissions (total number of bits broadcast divided by file size $F$) achieved by a \textbf{given} generic clique cover scheme $A$ on the side information graph induced by the placement ${\cal C}$ and demand $\mathbf{d}$. In the literature, sometimes $R({\cal C},\mathbf{d})$ is also called \textit{broadcast rate} or simply \textit{rate}. We replace $A$ by a short italicized string to denote various algorithms.
            
   \subsection{New Placement and Delivery Schemes}
       
       We first provide our new placement and delivery schemes in Algorithms \ref{alg:NewPlacement} and \ref{alg:NewDelivery} that forms the basis of all our results. The new placement scheme simplifies analysis and helps us to get concentration results. The new delivery scheme is just an efficient polynomial time (in all the parameters) implementation of the old delivery scheme.  
     \begin{algorithm}
      \KwIn{Parameters $K,M,N$ and $F$.}
      Let $F=\lceil N/M\rceil F'$ packets and $F'$ is an integer. Let every file be divided into $F'$ groups each of size $\lceil N/M\rceil$ each.\\
        \For {every user $k \in [1:K]$}
          { \For {every file $n \in [1:N]$}
             { \For {$f' \in [1:F']$}
                { $f'$-th packet of file $n$ in user $k$'s cache is randomly uniformly chosen from the set of $\lceil\frac{N}{M}\rceil$ packets of group $f'$ of file $n$.
                }
             }      
          }
     \KwOut{ Cache configuration for every user $k \in [1:K]$.}     
   \caption{NewPlacement }
   \label{alg:NewPlacement}
   \end{algorithm}  
   
 \begin{algorithm}
      \KwIn{Parameters $K,M,N$ and $F$, caches for all users $k \in [1:K]$ and demand set $\mathbf{d}=\left[d_1,d_2 \ldots d_K \right]$.}
       Let $C= \emptyset$. Let $S_{d_k,f} \subseteq [1:K],~\forall k \in [1:K],~f \in [1:F]$ be the exact subset of users in which the $f$-th packet of file requested by user $k$ is stored. \\
       Let $D \subset [1:K] \times [1:F]$ be the file packets that are stored in the user requesting the corresponding file, i.e. $D=\{(d_k,f): k \in S_{d_k,f} \}$.\\ 
        \For {$(d_k,f) \in [1:K] \times [1:F] - D$}
        {
                   \eIf{$(d_k,f) \notin C$}
                     { 
                          Let $A= \emptyset$. \\
                          \For {$j \in [1:K]-{m} $}
                           {   
                              \If{$\exists (j,f') \notin C~\mathrm{for~some~}f': S_{d_j,f'}=S_{d_k,f} \bigcup {k}-{j}$}
                                  { 
                                     $A \leftarrow A \bigcup (j,f')$ \\
                                   }
                           }
                       Transmit $x_{d_k,f} \oplus_{(j,f') \in A} x_{d_j,f'}$.\\      
                       $C \leftarrow C \bigcup (d_k,f) \bigcup A$.
                     }
                     {  Proceed with the next iteration.\\
                     }    
        }
               
   \caption{NewDelivery}
   \label{alg:NewDelivery}
   \end{algorithm}
   
\textbf{Remark:} $x_{d_k,f}$ in Algorithm \ref{alg:NewDelivery} refers to the content of packet $f$ of the file $d_k$. Also, it is easy to see that Algorithm \ref{alg:NewDelivery} runs in time polynomial in $K$ and $F$.

Let $R^{nd} \left( {\cal C},\mathbf{d} \right)$ denote the normalized transmissions achieved by Algorithm \ref{alg:NewDelivery}. Here, the string $nd$ denotes the delivery scheme in Algorithm \ref{alg:NewDelivery}. Here, $A$ in $ R^{A}({\cal C},\mathbf{d})$ is replaced by a string $nd$ to denote Algorithm \ref{alg:NewDelivery}. Let $\mathbf{d}_u$ denote a set of distinct demand requests by users, i.e. every user requests a distinct file. Let $R^{\mathrm{opt}} \left({\cal C},\mathbf{d} \right)$ denote the number of normalized transmissions under the optimal clique cover scheme on the side information graph due to the cache configuration ${\cal C}$ and the demand pattern $\mathbf{d}$. 

When $\cal{C}$ is chosen randomly, $R^{nd}({\cal C},\mathbf{d})$ is a random variable. Let $\mathbb{E}_{c}$ denote expectation taken over the cache configuration according to a specified random placement described by the string $c$. Further, let $\mathbb{E}_d$ denote expectation over a demand distribution described by $d$. Let $\mathbb{E}_{c,d}$ denote the expectation with respect to both. Let $c_{op}$ denote the `old' random placement according to Algorithm \ref{alg:OldPlacement}. Let $c_{np}$ denote `new' random placement according to Algorithm \ref{alg:NewPlacement}. 

We first note that our new delivery scheme performs identically to Algorithm \ref{alg:OldDelivery}. It is an efficient implementation of the old one.
\begin{theorem}\label{equality}
      The number of transmissions of Algorithm \ref{alg:NewDelivery} is identical to the number of transmissions of Algorithm \ref{alg:OldDelivery} for a given placement and a set of demands.
\end{theorem}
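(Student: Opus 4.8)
The plan is to show that, for any fixed placement $\mathcal{C}$ and demand vector $\mathbf{d}$, both algorithms cover the set of ``missing'' packet-requests by the same collection of cliques, up to the order in which those cliques are enumerated, hence emit the same number of broadcast symbols. All the bookkeeping is organized around the subsets $\mathcal{S}\subseteq[1:K]$. Call a \emph{request} a pair $(k,f)$ standing for the $f$-th packet of file $d_k$ as needed by user $k$; a request is transmitted by neither algorithm when $k\in S_{d_k,f}$ (the set $D$). For $\mathcal{S}\subseteq[1:K]$ and $k\in\mathcal{S}$ put $P_{\mathcal{S}}(k)=\{f:S_{d_k,f}=\mathcal{S}\setminus\{k\}\}$ and $n_{\mathcal{S}}(k)=|P_{\mathcal{S}}(k)|$. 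Every request $(k,f)$ with $k\notin S_{d_k,f}$ lies in exactly one $P_{\mathcal{S}}(k)$, namely the one with $\mathcal{S}=S_{d_k,f}\cup\{k\}$, so these sets partition the non-locally-cached requests; think of $P_{\mathcal{S}}(\cdot)$ as a family of ``bins'', one per $k\in\mathcal{S}$.

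For Algorithm~\ref{alg:OldDelivery} the count is immediate: for fixed $\mathcal{S}$ the vector $V_{k,\mathcal{S}\setminus\{k\}}$ has length $n_{\mathcal{S}}(k)$, so after zero-padding the single XOR emitted for $\mathcal{S}$ has $\max_{k\in\mathcal{S}}n_{\mathcal{S}}(k)$ coordinates; summing over $\mathcal{S}$ gives a total of $\sum_{\mathcal{S}\subseteq[1:K]}\max_{k\in\mathcal{S}}n_{\mathcal{S}}(k)$ transmissions (empty or all-zero groups contributing $0$). It then remains to show Algorithm~\ref{alg:NewDelivery} makes exactly this many ``Transmit'' calls, which I would do via two claims. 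First, every clique $A\cup\{(d_k,f)\}$ produced by Algorithm~\ref{alg:NewDelivery} lies entirely inside one group: the matching condition $S_{d_j,f'}=S_{d_k,f}\cup\{k\}\setminus\{j\}$ is precisely the requirement that $(d_j,f')$ and the center $(d_k,f)$ be mutually adjacent in the side-information graph, and this forces $j\in S_{d_k,f}$, i.e.\ the grabbed packet sits in the bin $P_{\mathcal{S}}(j)$ of the \emph{same} $\mathcal{S}=S_{d_k,f}\cup\{k\}$; hence cliques built from distinct groups never share packets and the group accounting decouples. Second, within a fixed group every clique Algorithm~\ref{alg:NewDelivery} builds is \emph{maximal}: since the inner loop runs over all other users and grabs an uncovered packet from every user whose bin still has an uncovered element, forming one clique decreases every currently-nonempty bin of that group by exactly one (the center's bin via the center, the others via the grabbed packets).

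These two claims combine through a short counting lemma: if $m$ bins of sizes $n_1,\dots,n_m$ are emptied by repeatedly deleting one element from every nonempty bin, the number of rounds is exactly $\max_i n_i$, since each round decrements every nonempty bin by one and the largest bin is emptied one element per round. Applying this inside each group $\mathcal{S}$ shows Algorithm~\ref{alg:NewDelivery} emits $\max_{k\in\mathcal{S}}n_{\mathcal{S}}(k)$ transmissions for that group, and summing over $\mathcal{S}$ reproduces the count for Algorithm~\ref{alg:OldDelivery}. The step I expect to need the most care is the first claim: being precise about the single-pass, ``skip if already covered'' control flow of Algorithm~\ref{alg:NewDelivery}, checking that processing a center in one group can never consume a packet belonging to another group's tally, and observing that the fixed processing order over $[1:K]\times[1:F]$ changes only which request of a bin plays the role of clique center, never the number of cliques. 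Everything else is routine bookkeeping.
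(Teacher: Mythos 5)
The paper itself offers no argument here beyond the one-line assertion that Algorithm~\ref{alg:NewDelivery} is ``an efficient way to implement'' Algorithm~\ref{alg:OldDelivery}, so your bin-per-subset accounting (partition the non-locally-cached requests into groups indexed by $\mathcal{S}=S_{d_k,f}\cup\{k\}$, show each transmission of the new algorithm removes exactly one element from every nonempty bin of its group, and invoke the rounds-equals-max-bin-size lemma to recover $\sum_{\mathcal{S}}\max_{k\in\mathcal{S}}|V_{k,\mathcal{S}\setminus\{k\}}|$) is exactly the right way to substantiate the claim, and everything after your first claim is correct.

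The one step that does not hold as you state it is the assertion that the matching condition $S_{d_j,f'}=S_{d_k,f}\cup\{k\}\setminus\{j\}$ ``is precisely'' mutual adjacency and ``forces $j\in S_{d_k,f}$.'' It does not: if $j\notin S_{d_k,f}\cup\{k\}$, the right-hand side is simply $S_{d_k,f}\cup\{k\}$, and nothing prevents some packet of file $d_j$ from being stored in exactly that set, in which case the literal condition matches even though $(d_j,f')$ and $(d_k,f)$ are not mutually adjacent (user $j$ lacks $(d_k,f)$ and could not decode). Such a packet belongs to the \emph{different} group $S_{d_k,f}\cup\{k\}\cup\{j\}$, so grabbing it would both break your group-decoupling claim and change the tally: e.g.\ with $K=2$, $F=1$, $S_{d_1,1}=\emptyset$, $S_{d_2,1}=\{1\}$, the literal pseudocode emits one XOR while Algorithm~\ref{alg:OldDelivery} emits two. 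The theorem is true only under the evidently intended reading in which the inner loop ranges over $j\in S_{d_k,f}$ (equivalently, only packets forming a clique with the center are grabbed) — the paper's loop ``$j\in[1:K]-m$'' is itself garbled, and any other reading yields a non-decodable scheme. So you should either state that restriction explicitly as the semantics of the algorithm, or add the missing argument excluding $j\notin S_{d_k,f}$; with that repair, your per-group decrement argument and counting lemma go through and the proof is complete.
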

\begin{proof}
   It is easy to see from the description that Algorithm \ref{alg:NewDelivery} is an efficient way to implement Algorithm \ref{alg:OldDelivery}.
\end{proof}

Even the new placement scheme is very similar to the old placement scheme except that it reduces lots of unwanted correlations between different packets belonging to the same file.
This helps us simplify analysis.
\section{Concentration results} \label{sec:conc}

\subsection{New Placement Scheme}
Now, we state Theorem \ref{Thm:Concrate1} about concentration of $R({\cal C},\mathbf{d})$ around its mean for the two placement algorithms. Please note that, the concentration results hold for any delivery algorithm that provides a clique cover on the side information graph induced by ${\cal C}$. Therefore, we do not specify the algorithm used and hence we drop $A$ in $R^A({\cal C},\mathbf{d})$. Before that, we provide a standard technical lemma regarding concentration of martingales.

\begin{lemma}\label{Lem:Azumahoeff}
  (Azuma-Hoeffding, McDiarmid) Consider a random variable $Z=f(X_1,X_2 \ldots X_n)$ where $f(\cdot)$ is a real-valued function and $X_1,X_2 \ldots X_n$ are $n$ random variables. Then $\{ \mathbb{E}[Z \lvert X_0,X_1,X_2 \ldots X_i ] \}_{i=0}^{n}$ forms a martingale. Here, $X_0$ is taken to be a constant. Suppose, these random variables satisfy either one of the following:
   \begin{enumerate}   
     \item \begin{equation}
          \lvert \mathbb{E}[Z \lvert X_1,X_2 \ldots X_i ] - \mathbb{E}[Z \lvert X_1,X_2 \ldots X_{i-1}] \rvert \leq c_i 
      \end{equation} 
    \item (Average Lipschitz Condition)  
       \begin{equation}
          \lvert \mathbb{E}[Z \lvert X_1,X_2 \ldots X_i=a ] - \mathbb{E}[Z \lvert X_1,X_2 \ldots X_{i}=a'] \rvert \leq c_i 
      \end{equation}
    \item (McDiarmid's Inequality) Suppose the set of random variables $\{X_i\}$ are independent.
        \begin{equation}
           \lvert f\left(X_1,X_2, \ldots X_i, \ldots X_n\right) - f \left( X_1,X_2, \ldots X'_i, \ldots X_n\ \right)\rvert \leq c_i
        \end{equation}
    \end{enumerate}  
    
   Then, the following concentration result holds:
   \begin{equation}
       \mathrm{Pr} \left( \lvert Z -\mathbb{E}[Z] \rvert > t \right) \leq 2\exp \left( -\frac{2 t^2}{ \sum_i c_i^2}\right)
   \end{equation}
\end{lemma}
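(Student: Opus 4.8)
The plan is to run the standard Doob (exposure) martingale argument and then apply Hoeffding's lemma. First I would confirm the claimed martingale structure: putting $Z_i=\mathbb{E}[Z\mid X_1,\ldots,X_i]$ with $Z_0=\mathbb{E}[Z]$, the tower property of conditional expectation gives $\mathbb{E}[Z_i\mid X_1,\ldots,X_{i-1}]=Z_{i-1}$, so $\{Z_i\}_{i=0}^n$ is a martingale for the filtration generated by $X_1,\ldots,X_i$, and $Z-\mathbb{E}[Z]=\sum_{i=1}^n D_i$ where $D_i=Z_i-Z_{i-1}$ are the martingale differences, each with $\mathbb{E}[D_i\mid X_1,\ldots,X_{i-1}]=0$. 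Everything then reduces to controlling the conditional moment generating function of each $D_i$.

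The key reduction is to show that under each of the three hypotheses, $D_i$ lies, conditionally on $X_1,\ldots,X_{i-1}$, in an interval of length at most $c_i$. Under hypothesis (1) this is immediate since $|D_i|\le c_i$. Under hypothesis (2), set $h_i(a)=\mathbb{E}[Z\mid X_1,\ldots,X_{i-1},X_i=a]$; then $Z_i=h_i(X_i)$ and, by the tower property, $Z_{i-1}=\mathbb{E}[h_i(X_i)\mid X_1,\ldots,X_{i-1}]$, which is an average of $h_i$ values and hence a constant (given the past) lying in $[\inf h_i,\sup h_i]$, so $D_i=h_i(X_i)-Z_{i-1}$ ranges in an interval of length $\sup h_i-\inf h_i\le c_i$. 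Under hypothesis (3) independence lets us integrate out the future coordinates: with $g_i(x_1,\ldots,x_i)=\mathbb{E}[f(x_1,\ldots,x_i,X_{i+1},\ldots,X_n)]$ we have $Z_i=g_i(X_1,\ldots,X_i)$, and the bounded-differences condition forces $\sup_{x_i}g_i-\inf_{x_i}g_i\le c_i$ for fixed $x_1,\ldots,x_{i-1}$ (averaging cannot increase oscillation), so again $D_i$ has conditional range at most $c_i$.

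With this in hand I would invoke Hoeffding's lemma: a zero-mean random variable supported on an interval of length $\ell$ has moment generating function at most $\exp(\lambda^2\ell^2/8)$. Applied conditionally on $X_1,\ldots,X_{i-1}$ to $D_i$, this gives $\mathbb{E}[e^{\lambda D_i}\mid X_1,\ldots,X_{i-1}]\le e^{\lambda^2 c_i^2/8}$. Peeling off one difference at a time,
\[
\mathbb{E}\!\left[e^{\lambda(Z-\mathbb{E}Z)}\right]=\mathbb{E}\!\left[e^{\lambda\sum_{i=1}^{n-1}D_i}\,\mathbb{E}[e^{\lambda D_n}\mid X_1,\ldots,X_{n-1}]\right]\le e^{\lambda^2 c_n^2/8}\,\mathbb{E}\!\left[e^{\lambda\sum_{i=1}^{n-1}D_i}\right],
\]
and iterating yields $\mathbb{E}[e^{\lambda(Z-\mathbb{E}Z)}]\le\exp\!\big(\lambda^2\sum_i c_i^2/8\big)$. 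A Chernoff bound gives $\Pr(Z-\mathbb{E}Z>t)\le\exp\!\big(-\lambda t+\lambda^2\sum_i c_i^2/8\big)$; optimizing at $\lambda=4t/\sum_i c_i^2$ produces $\exp\!\big(-2t^2/\sum_i c_i^2\big)$, and applying the same argument to $-Z$ and union bounding over the two tails yields the stated two-sided inequality.

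There is no genuine obstacle here, as this is a textbook result; the only points needing care are (i) the reduction step for hypotheses (2) and (3), where one must verify that the stated Lipschitz/bounded-differences conditions control the conditional \emph{range} of $D_i$ (Hoeffding's lemma needs almost-sure boundedness of the increment, not merely a variance bound), and (ii) keeping the numerical constants sharp, since obtaining exactly the factor $2$ in the exponent relies on the optimal constant $1/8$ in Hoeffding's lemma followed by the correct Chernoff optimization.
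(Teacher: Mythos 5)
The paper itself offers no proof of this lemma: it is invoked as a standard result (Azuma--Hoeffding / McDiarmid), so there is no "paper proof" to compare against, and your job here is simply to supply the textbook argument. Your proposal does exactly that, and it is essentially correct: the Doob exposure martingale $Z_i=\mathbb{E}[Z\mid X_1,\ldots,X_i]$, the reduction of each hypothesis to a bound on the conditional range of the increment $D_i=Z_i-Z_{i-1}$, Hoeffding's lemma applied conditionally, the telescoping of the moment generating function, and the Chernoff optimization at $\lambda=4t/\sum_i c_i^2$ are precisely the standard route, and your handling of hypotheses (2) and (3) (via $h_i$ and $g_i$, noting that $Z_{i-1}$ is an average of $h_i$ values and that averaging out future coordinates cannot increase the oscillation) is correct and gives the stated constant $2$ in the exponent.

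One point deserves a caveat: your claim that hypothesis (1) "immediately" confines $D_i$ to an interval of length $c_i$ is not right as written. The condition $|D_i|\le c_i$ only places $D_i$ in $[-c_i,c_i]$, an interval of length $2c_i$, so running your argument under hypothesis (1) yields $2\exp\left(-t^2/\bigl(2\sum_i c_i^2\bigr)\right)$ (classical Azuma) rather than $2\exp\left(-2t^2/\sum_i c_i^2\right)$. This looseness is really inherited from the lemma statement itself, which lumps all three conditions under the sharper constant; it is harmless for the paper, since Theorems \ref{Thm:Concrate1} and \ref{thm:concold} only use the average-Lipschitz/bounded-differences forms (2) and (3), for which your constant is correct, but if you want the proof to cover case (1) with the stated constant you would either need the increments to have conditional range $c_i$ (not merely $|D_i|\le c_i$) or accept the weaker exponent in that case.
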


 \begin{theorem}\label{Thm:Concrate1}
    For the random placement (denoted by string $c_{np}$) given in Algorithm \ref{alg:NewPlacement}, any demand distribution denoted by string $d$ (including a singleton distribution on a specific demand) and for any clique cover delivery scheme, we have: 
       \begin{equation}
           \mathrm{Pr}_{c_{np},d}\left(  \lvert R({\cal C},\mathbf{d}) - \mathbb{E}_{c_{np},d} \left[ R ({\cal C},\mathbf{d}) \right] \rvert \geq \epsilon \mathbb{E}_{c_{np},d}[R ({\cal C},\mathbf{d})]  \right) \leq 2\exp \left(- \frac {2 \epsilon^2 \left(\mathbb{E}_{c_{np},d}[R ({\cal C},\mathbf{d})] \right)^2 F }{K \frac{N^2}{M^2}}  \right)
       \end{equation}
 \end{theorem}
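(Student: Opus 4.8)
The plan is to apply McDiarmid's bounded–differences inequality (item~3 of Lemma~\ref{Lem:Azumahoeff}) to $Z=R({\cal C},\mathbf{d})$, regarded as a function of the elementary independent random choices that define the placement ${\cal C}$ under Algorithm~\ref{alg:NewPlacement}. The key structural observation is that the configuration ${\cal C}$ generated by \textsf{NewPlacement} is a deterministic function of the $KNF'$ mutually independent variables $\{X_{k,n,f'}: k\in[1:K],\, n\in[1:N],\, f'\in[1:F']\}$, where $X_{k,n,f'}\in[1:\lceil N/M\rceil]$ is the (uniform) index of the packet of group $f'$ of file $n$ cached at user $k$; this decorrelation across files and groups is precisely why the new placement is used here. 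I would first establish the bound conditioned on a fixed demand vector $\mathbf{d}$ (the ``singleton'' case). For a general demand distribution $d$, the symmetry of \textsf{NewPlacement} across users and files makes the conditional mean $\mathbb{E}_{c_{np}}[R({\cal C},\mathbf{d})]$ depend on $\mathbf{d}$ only through its multiplicity profile, so the tail bound transfers to deviations from the joint mean $\mathbb{E}_{c_{np},d}[R({\cal C},\mathbf{d})]$ (and in the peak-rate regime the relevant demands are distinct, i.e.\ $\mathbf{d}_u$).

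The core of the argument is the bounded–differences estimate: changing a single coordinate $X_{k_0,n_0,f_0'}$ (producing a placement ${\cal C}'$) changes $R$ by at most $c_{k_0,n_0,f_0'}$, where $c_{k_0,n_0,f_0'}=0$ unless $n_0$ is requested, and otherwise $c_{k_0,n_0,f_0'}\le(|\{k:d_k=n_0\}|+1)/F$. Such a change swaps exactly one packet $p_{\mathrm{old}}\to p_{\mathrm{new}}$ of file $n_0$ in cache $k_0$, and in the side-information graph $G({\cal C},\mathbf{d})$ this (i) alters only the out-edges of the $F$ packet-nodes of user $k_0$ that point to the (at most $|\{k:d_k=n_0\}|$) graph copies of $(n_0,p_{\mathrm{old}})$ and $(n_0,p_{\mathrm{new}})$, and (ii) only when $n_0=d_{k_0}$, moves two nodes into/out of the set $D$ of locally cached packets. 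Since a clique contains at most one node per user, a single swap can invalidate at most $|\{k:d_k=n_0\}|$ cliques of any given clique cover; each such clique is repaired by deleting from it the unique user-$k_0$ node and re-inserting that node as a singleton, at a cost of one transmission apiece (plus at most one more in the own-demand case). Hence any clique cover of $G({\cal C},\mathbf{d})$ induces a clique cover of $G({\cal C}',\mathbf{d})$ with at most $|\{k:d_k=n_0\}|+1$ extra cliques, and symmetrically; dividing by the normalization $F$ yields the claimed $c_{k_0,n_0,f_0'}$, uniformly over all clique-cover delivery schemes (in particular for the optimal cover and for Algorithm~\ref{alg:OldDelivery}, whose rate is an explicit local function of the sets $S_{n,f}$).

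It then remains to sum the squared increments and invoke the lemma. Only coordinates whose file is requested contribute, and there are $KF'$ such coordinates per requested file, so $\sum_i c_i^2\le\sum_{n_0\ \mathrm{requested}}KF'\big((|\{k:d_k=n_0\}|+1)/F\big)^2$; using $F'=F/\lceil N/M\rceil$ together with $\sum_{n_0}|\{k:d_k=n_0\}|=K$, a short computation bounds this above by $K(N/M)^2/F$. Feeding $\sum_i c_i^2\le K(N/M)^2/F$ and $t=\epsilon\,\mathbb{E}_{c_{np},d}[R({\cal C},\mathbf{d})]$ into Lemma~\ref{Lem:Azumahoeff} produces exactly the stated tail bound. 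I expect the main obstacle to be making the clique-cover perturbation argument in the second step fully rigorous and genuinely uniform over arbitrary clique-cover schemes — i.e.\ showing carefully that a one-packet change is a local modification of \emph{every} clique partition, and correctly accounting for the own-demand packet-nodes that enter and leave the graph — together with the (routine but fiddly) bookkeeping in the summation step for demand vectors with repeated files.
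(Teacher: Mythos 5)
Your overall strategy (bounded differences over the independent elementary choices of \textsf{NewPlacement}) is reasonable, but the final accounting does not give the stated bound, and this is a genuine gap rather than a bookkeeping detail. With your coordinates $X_{k,n,f'}$ there are $K F'$ relevant coordinates \emph{per requested file} (so up to $K^2F'$ in total for distinct demands), and your per-coordinate increment is $(|\{k:d_k=n_0\}|+1)/F$, which does not shrink with the number of coordinates: even in the most favorable case of distinct demands ($m_{n_0}=1$, increment $2/F$) you get $\sum_i c_i^2 \le 4K^2F'/F^2 = 4K^2/(F\lceil N/M\rceil)$, and the claimed conclusion $\sum_i c_i^2 \le K(N/M)^2/F$ would require $4K \lesssim (N/M)^3$. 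This fails exactly in the regime of interest for coded caching (large gain $t=KM/N$, e.g.\ $N/M$ constant or polylogarithmic in $K$), and with repeated demands your increments $(m_{n_0}+1)/F$ make the sum worse still (up to order $K^3/(F\lceil N/M\rceil)$). So the ``short computation'' asserted at the end is false in general; your argument proves a concentration inequality with exponent of order $\epsilon^2\mathbb{E}[R]^2 F\lceil N/M\rceil/K^2$ rather than the theorem's $\epsilon^2\mathbb{E}[R]^2 F M^2/(KN^2)$, which is strictly weaker whenever $(N/M)^3 = o(K)$. (A broadcast-style repair --- evicting all copies of the single packet $p_{\mathrm{old}}$ and sending it once --- improves your increment to $O(1)/F$ uniformly in the multiplicity, but even then the $K$-fold larger number of coordinates leaves the same deficit.)

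The paper avoids this by choosing a coarser filtration: each martingale step reveals the entire cross-cache placement $S_{d_k,f}\subseteq[1:K]$ of one packet of a requested file (only $KF$ such variables, now dependent, so it uses the average-Lipschitz/Azuma form of Lemma~\ref{Lem:Azumahoeff} rather than McDiarmid), and bounds each increment by $\lceil N/M\rceil/F$ via a sandwich argument: run the clique-cover scheme with the whole group of $\lceil N/M\rceil$ packets of that file removed, then broadcast that group separately, at cost $\lceil N/M\rceil/F$ \emph{independent of how many users request the file}. That yields $\sum_i c_i^2 \le KF(\lceil N/M\rceil/F)^2 = K\lceil N/M\rceil^2/F$ and hence the stated exponent. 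If you want to keep your fine-grained independent coordinates, you must either sharpen the per-coordinate increment well below $1/F$ (which is not true in the worst case, since swapping one packet in one cache can genuinely cost a constant number of extra cliques) or accept the weaker exponent; otherwise you should coarsen the exposed variables as the paper does. A secondary point: your claim that the per-coordinate bound holds ``uniformly over arbitrary clique-cover schemes'' is only justified by your repair argument for the optimal cover (the paper's Algorithm-B sandwich has the analogous caveat), and the transfer from fixed $\mathbf{d}$ to a general demand distribution needs more care than the symmetry remark, since concentration around each conditional mean does not by itself give concentration around $\mathbb{E}_{c_{np},d}[R]$ unless the conditional means coincide.
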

 \begin{proof} 
   We use martingale analysis on a generic clique cover algorithm. We denote any generic clique cover algorithm by algorithm A. Clearly, the number of transmissions:
      \begin{equation}
      R \left( {\cal C},\mathbf{d} \right) = h \left( S_{d_1,1},S_{d_1,2} \ldots S_{d_1,F} \ldots S_{d_k,j} \ldots S_{d_K,F} \right)
      \end{equation}
       for some function $h(\cdot)$ where $S_{d_k,f} \subseteq [1:K]$ is the subset of users caches in which the file packet $f$ of file $d_k$ is cached. In other words, the number of transmissions given ${\cal C}$ and $\mathbf{d}$ is determined fully by specifying $S_{d_k,f},~\forall d_k \in [1:K], ~f \in [1:F]$. Further, $S_{d_k,f}$ is dependent on both the cache configuration ${\cal C}$ and the demand $\mathbf{d}$. 
    
     Now, we apply Lemma \ref{Lem:Azumahoeff} with random variables $X_{d_k,f}$ set to $S_{d_k,f},~\forall k \in [1:K],~f \in [1:F]$ and $Z$ is set to $R \left({\cal C},\mathbf{d}\right)$. Consider the expression for a specific $(d_k,f)$:
     \begin{equation} \label{eq:ALC}
    c_{d_k,f}= \lvert \mathbb{E}_{c_{np},d} \left[ R \left( {\cal C}, \mathbf{d} \right) \right \lvert S_{1,1}, S_{1,2} \ldots S_{d_k,f}]- \mathbb{E}_{c_{np},d} \left[ R \left( {\cal C}, \mathbf{d} \right) \right \lvert S_{1,1}, S_{1,2} \ldots S_{d_k,f-1}] \rvert
    \end{equation}
       In the first term in (\ref{eq:ALC}), let us assume that the choice of $S_{d_k,f}$ is consistent with the previous choices of $S_{1,1},S_{1,2} $ $\ldots S_{d_k,f-1}$. In Algorithm \ref{alg:NewPlacement}, every file is grouped into $F'$ groups each of size $\lceil\frac{N}{M}\rceil$. Let us assume that packet $f$ of the file $d_k$ belongs to group $g$. The placement of file packets is independent across the groups $g$. The choice of $S_{d_k,f}$ affects the placement of at most $\lceil\frac{N}{M}\rceil$ packets belonging to group $g$ of file $d_k$. Other file packet placements are unaffected. Let $V=\{ (d_k,f): f\mathrm{~belongs~to~group~}g \}$ be the set of bits in the same group $g$ of file $d_k$.
       
         Consider a new Algorithm B: 1) Run clique cover algorithm A excluding the file packets in $V$. 2) Then, transmit the file packets in $V$ separately. The file packets in $V$ is not used in Step $1$ of algorithm B. Let $R_B({\cal C},\mathbf{d})$ be the number of transmissions in Step $1$ of Algorithm B. Clearly, the following holds:
         \begin{equation}
             R_B({\cal C},\mathbf{d}) \leq R({\cal C}, \mathbf{d}) \leq R_B({\cal C},\mathbf{d}) +  \frac{ \lceil\frac{N}{M}\rceil} {F}
         \end{equation}
         
          This is because, the first step of Algorithm B employs the same clique cover scheme as Algorithm A and operates on a sub-graph induced by the file packets in the system other than $V$. Therefore, the number of transmissions has to be reduced. Further, adding the packets of $V$ in step $2$ is a sub-optimal way of improving algorithm $A$.
         
          Therefore, both $\mathbb{E}_{c_{np}} \left[ R \left( {\cal C}, \mathbf{d} \right) \right \lvert S_{1,1}, S_{1,2} \ldots S_{d_k,f}=A]$ and $\mathbb{E}_{c_{np}} \left[ R \left( {\cal C}, \mathbf{d} \right) \right \lvert S_{1,1}, S_{1,2} \ldots S_{d_k,f-1}] $ are at most $\frac{\lceil\frac{N}{M} \rceil}{F}$ away from the performance of Step $1$ of algorithm $B$ averaged over their respective cache realizations. Further, the performance of Step $1$ of algorithm $B$ ($R_B({\cal C},\mathbf{d})$) is independent of the choice of $S_{d_k,f}$ because the possibly affected file packets (in set $V$) have been removed in Step $1$ of algorithm $B$. Therefore, $c_{d_k,f} \leq \frac{N}{MF},~\forall k$ in Lemma \ref{Lem:Azumahoeff}.
 
  Hence, applying Lemma \ref{Lem:Azumahoeff}, we have: 
     \begin{align}
            \mathrm{Pr}_{c_{np},d}\left(  \lvert R({\cal C},\mathbf{d}) - \mathbb{E}_{c_{np},d} \left[ R ({\cal C},\mathbf{d}) \right] \rvert \geq \epsilon \mathbb{E}_{c_{np},d}[R ({\cal C},\mathbf{d})]  \right) & \leq 2\exp \left(- \frac {2 \epsilon^2 \left(\mathbb{E}_{c_{np},d}[R ({\cal C},\mathbf{d})] \right)^2 }{ KF \left( \frac{N}{MF} \right)^2}  \right) \nonumber \\
             \hfill       & \leq \exp \left(- \frac {2 \epsilon^2 \left(\mathbb{E}_{c_{np},d}[R ({\cal C},\mathbf{d})] \right)^2 F }{ K \left( \frac{N}{M} \right)^2 } \right)
     \end{align}   
 
 \end{proof}
 
 \textbf{Remark:} The above result shows that when $F \geq \frac{8}{\epsilon^2}\frac{KN^2}{M^2 \left(\mathbb{E}_{c_{np},d}[R ({\cal C},\mathbf{d})] \right)^2 } \log K$, then with probability at least $1-1/K^8$, $R({\cal C},\mathbf{d}) \in \left[ (1-\epsilon) \mathbb{E} \left( R({\cal C},\mathbf{d}) \right), (1+\epsilon)\mathbb{E} \left( R({\cal C},\mathbf{d}) \right) \right]$. But, for these algorithms to have a non-trivial gain even when $F \rightarrow \infty$, $KM/N \geq 1$ (see (\ref{eqn:peakexp})). This means that $N/M \leq K$. Hence, $F = O(K^3 \log K)$ is sufficient for $R({\cal C},\mathbf{d})$ to be below $(1+\epsilon) \mathbb{E}_{c_{np},d} \left( R({\cal C},\mathbf{d}) \right)$ with very high probability.

    \subsection{Old Placement}
         \begin{theorem} \label{thm:concold}
       Under the old placement scheme $c_{op}$ and any demand distribution on $\mathbf{d}$ (including a singleton distribution on a specific demand), the number of transmissions for any clique cover scheme satisfies the following concentration result:
         \begin{equation}
              \mathrm{Pr}_{c_{op},d}\left(  \lvert R({\cal C},\mathbf{d}) - \mathbb{E}_{c_{op},d} \left[ R ({\cal C},\mathbf{d}) \right] \rvert \geq \epsilon \mathbb{E}_{c_{op},d} \left[ R ({\cal C},\mathbf{d}) \right]  \right) \leq 2\exp \left(- \frac {2 \epsilon^2 \left(\mathbb{E}_{c_{op},d} \left[ R ({\cal C},\mathbf{d}) \right] \right)^2 F }{K(K+1)^2}  \right)
         \end{equation}
     \end{theorem}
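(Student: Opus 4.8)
The plan is to repeat, almost verbatim, the martingale argument used in the proof of Theorem~\ref{Thm:Concrate1}, the only new ingredient being a bound on the one-step differences that accounts for the extra correlations present in Algorithm~\ref{alg:OldPlacement}. Exactly as there, the number of normalized transmissions is a function $R({\cal C},\mathbf{d})=h(S_{d_1,1},S_{d_1,2},\ldots,S_{d_K,F})$ of the $KF$ sets $S_{d_k,f}\subseteq[1:K]$, where $S_{d_k,f}$ records which user caches hold packet $f$ of file $d_k$. I would reveal these $KF$ sets one at a time, file by file and then packet by packet within a file (a repeated $S_{d_k,f}$, arising when demands coincide, simply contributes a zero difference), and invoke Lemma~\ref{Lem:Azumahoeff} with $Z=R({\cal C},\mathbf{d})$ and $X_{d_k,f}=S_{d_k,f}$. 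The whole question is thus to bound the martingale difference $c_{d_k,f}$ analogous to~(\ref{eq:ALC}).

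The crucial claim is $c_{d_k,f}\le (K+1)/F$. The subtlety relative to Algorithm~\ref{alg:NewPlacement} is that in the old placement the cached subset of a file in a given cache has the fixed size $MF/N$, so conditioning on the value of $S_{d_k,f}$ does not merely pin down the placement of packet $(d_k,f)$: it also shifts the conditional law of the remaining, not-yet-revealed packets of file $d_k$. To control this I would couple the conditional distributions of the cache configuration under $S_{d_k,f}=a$ and under $S_{d_k,f}=a'$ (given the already-revealed sets): for every user $j$ whose membership in $S_{d_k,f}$ flips between $a$ and $a'$, the budget constraint forces exactly one compensating swap of some not-yet-revealed packet of file $d_k$ in cache $j$, while everything else -- all other users, all other files -- is left identical. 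Hence the two coupled configurations differ only in the placement of at most $K+1$ packets, all of file $d_k$ (packet $f$ itself, plus at most one packet per flipped user). Now apply the ``Algorithm~B'' device from the proof of Theorem~\ref{Thm:Concrate1}: run the given clique-cover scheme on the side-information subgraph obtained after deleting all nodes of these $\le K+1$ packets -- this run is literally identical for both coupled configurations, since none of its edges depend on the perturbed packets -- and then transmit those $\le K+1$ packets uncoded. This yields $R_B({\cal C},\mathbf{d})\le R({\cal C},\mathbf{d})\le R_B({\cal C},\mathbf{d})+(K+1)/F$ with a common $R_B$, so $|R({\cal C}_a,\mathbf{d})-R({\cal C}_{a'},\mathbf{d})|\le (K+1)/F$ pointwise under the coupling, which bounds $c_{d_k,f}$ by $(K+1)/F$.

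Plugging this into Lemma~\ref{Lem:Azumahoeff} with $KF$ variables gives $\sum_{k,f}c_{d_k,f}^2\le KF\cdot\big((K+1)/F\big)^2=K(K+1)^2/F$, hence $\mathrm{Pr}_{c_{op},d}\big(|R({\cal C},\mathbf{d})-\mathbb{E}_{c_{op},d}[R({\cal C},\mathbf{d})]|\ge t\big)\le 2\exp\!\big(-2t^2F/(K(K+1)^2)\big)$, and setting $t=\epsilon\,\mathbb{E}_{c_{op},d}[R({\cal C},\mathbf{d})]$ is exactly the stated inequality. The same computation covers a singleton demand distribution, and, as in Theorem~\ref{Thm:Concrate1}, the argument never uses anything about the clique-cover algorithm beyond its being a clique cover, so the demand averaging is harmless.

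I expect the only real obstacle to be making the coupling watertight: one must verify that conditioning on an arbitrary prefix $S_{d_1,1},\ldots,S_{d_k,f-1}$ together with the value of $S_{d_k,f}$ leaves, in each cache separately, a uniform random subset of the correct size over the remaining packets of file $d_k$, so that the ``one compensating swap per flipped user'' coupling is genuinely available and really touches at most $K+1$ packets. Everything downstream of that claim is a direct transcription of the proof of Theorem~\ref{Thm:Concrate1}.
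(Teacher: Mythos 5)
Your proof is correct (at the same level of rigor as the paper) and shares the paper's overall skeleton: reveal the $KF$ sets $S_{d_k,f}$ one at a time, establish the average-Lipschitz bound $c_{d_k,f}\le (K+1)/F$, and apply Lemma~\ref{Lem:Azumahoeff}, which gives exactly the stated exponent $2\epsilon^2(\mathbb{E}[R])^2F/(K(K+1)^2)$. Where you diverge is in how the key bound on $c_{d_k,f}$ is obtained. The paper introduces an auxiliary ``genie-aided'' placement $c_{gp}$: the genie hands packet $(d_k,f)$ to every user for free and the $\lvert S\rvert\le K$ cache slots it occupied are refilled with fresh random packets of file $d_k$; one then argues that the genie-aided conditional expectation is independent of $S_{d_k,f}$ (it is old placement on $F-1$ packets) and is within $(K+1)/F$ of the old-placement conditional expectation, since the free packet saves one transmission and the $\le K$ extra cached packets save at most $K$ more. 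You instead couple the two conditional laws under $S_{d_k,f}=a$ and $S_{d_k,f}=a'$ directly -- one compensating swap per flipped cache, valid because within each cache the unrevealed packets of file $d_k$ remain a uniform fixed-size subset -- so the coupled configurations differ in at most $K+1$ packets, and you convert that into a pointwise rate difference of $(K+1)/F$ via the ``delete-and-send-uncoded'' (Algorithm~B) device borrowed from the proof of Theorem~\ref{Thm:Concrate1}. The two routes are equivalent in strength and in the constant they produce; notably, both rest on the same implicit monotonicity property of the generic clique-cover algorithm that the paper never proves (your $R_B\le R$ under vertex deletion plays the role of the paper's claim that the genie's extra side information can only reduce the algorithm's transmissions), and both inherit the paper's loose treatment of the demand averaging. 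Your version has the mild advantage of avoiding the auxiliary placement distribution and making the perturbation-size argument explicit; the paper's genie argument avoids having to exhibit the swap coupling.
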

     \begin{proof}
       We use a martingale argument as before. Consider a generic clique cover scheme implemented by Algorithm $A$. As before, $R({\cal C},d)$ is a function of $\{S_{d_k,f} \}_{k \in [1:K], f \in [1:F]}$ where $S_{d_k,f} \subseteq [1:K]$ is the subset of caches in which $f$-th file packet of file $d_k$ is stored. In this proof, $S_{d_k,f}$ is with respect to the old placement scheme $c_{op}$. Consider the following:
          \begin{equation} \label{eqn:diffold}
               c_{d_k,f}=  \lvert \mathbb{E}_{c_{op},d} \left[ R \left( {\cal C}, \mathbf{d} \right) \right \lvert S_{1,1}, S_{1,2} \ldots S_{d_k,f}=S]- \mathbb{E}_{c_{op}} \left[ R \left( {\cal C}, \mathbf{d} \right) \right \lvert S_{1,1}, S_{1,2} \ldots S_{d_k,f}=S'] \rvert
          \end{equation} 
     
      When placement of file packets $S_{1,1} \ldots S_{d_k,f-1}$ are fixed, let $n_j$ packets be left among $MF/N$ packets allocated for the file requested by user $k$ in user cache $j$. When $S_{d_k,f}=S \subseteq [1:K]$, let the number of packets left for file $d_k$ at user $j$'s cache be $n_j- \mathbf{1}_S(j) $. $\mathbf{1}_S(j)=1$ if $j \in S$ and $0$ otherwise. In any realization, satisfying the first conditioning where $S_{d_k,f}=S$ in (\ref{eqn:diffold}), for user cache $j$, $n_j- \mathbf{1}_S(j)$ packets are randomly chosen from the remaining $F-f$ packets belonging to file requested by user $k$. Similarly, $n_j- \mathbf{1}_S'(j )$ packets are randomly chosen from the remaining $F-f$ packets belonging to file requested by user $k$ for the second conditioning when $S_{d_k,f}=S'$ in (\ref{eqn:diffold}).
      
       Now, consider the following scheme (delivery+placement scheme) with genie aided transmission as follows: 1) Genie provides the packet $(d_k,f)$ to all users during decoding but is not placed in any user's cache. 
    %   2) After placing the bits according to $\{S_{1,1},S_{1,2} \ldots S_{d_k,f-1} \}$ in the conditioning of (\ref{eqn:diffold}), we fill user cache $j$ with $n_j$ packets randomly drawn (a uniform random subset of size $n_j$ ) from the remaining $F-f$ packets from the file $d_k$ for all users $j$. Placement of every other file, which is not $d_k$, is identical to the old placement 3) Apply algorithm A (delivery scheme based on clique cover) to the resulting placement. Let $R({\cal C}^{g},\mathbf{d})$ denote the number of transmissions due to this scheme. Let $c_{gp}$ denote this modified genie aided cache placement. 
    2) We perform the old placement. 3) Since genie provides $(d_k,f)$ to all users for 'free', if $S_{d_k,f}=S$ immediately after step $2$, we delete the file packet $(d_k,f)$ from caches of users in set $S$ and for all users in $S$ replace it with a new random file packet from the remaining file packets of the file $d_k$, different from the ones placed according to the old placement including the file packet $(d_k,f)$ in step $2$.  We use $c_{gp}$ to denote 'genie-aided placement' as described above.     
 
    Let us contrast this with the old placement: In the old placement when $S_{d_k,f}=S$, for every $j \in S$, $(d_k,f)$ is placed in cache $j$. In the genie aided case, this space at user $j$ has been taken over by a randomly chosen packet from file $d_k$ that has not been used by the old placement at user $j$. Everything else remains the identical to old placement. Genie helps every user get the file packet $(d_k,f)$ saving one packet transmission. In addition to the old placement, $\lvert S \rvert$ additional random file bits are stored. This could at most save $\lvert S \rvert \leq K$ packet transmissions from that of the old placement under any clique covering scheme. It is because every user gets at most one extra packet in its cache from file $d_k$ that it originally had.  Therefore,
        \begin{equation}\label{Eqn:diff}
            0 \leq \mathbb{E}_{c_{op},d} \left[ R \left( {\cal C}, \mathbf{d} \right) \lvert S_{1,1}, \ldots S_{d_k,f-1}, S_{d_k,f}=S \right] - \mathbb{E}_{c_{gp},d} \left[ R \left( {\cal C}^{g}, \mathbf{d} \right) \lvert S_{1,1}, \ldots S_{d_k,f-1}, S_{d_k,f}= S \right]  \leq \frac{K+1}{F}.
        \end{equation}
       Note that, in the above equation, conditioning of $c_{gp}$ till $S_{d_k,f}$ is with respect to step $2$ ( just immediately after the old placement) in the genie aided-placement. 
        
        Now, we view the genie aided-placement using a second view: It is exactly identical to performing the old placement except that, only for file $d_k$, for each user a random set of $MF/N$ packets are drawn from $F-1$ packets that excludes the packet $(d_k,f)$. The genie aided placement ignores conditioning $S_{d_k,f}$ of the old placement. This means that:
         \begin{equation}\label{eqn:same}
            \mathbb{E}_{c_{gp},d} \left[ R \left( {\cal C}, \mathbf{d} \right) \lvert S_{1,1}, \ldots S_{d_k,f-1}, S_{d_k,f}= S\right]= \mathbb{E}_{c_{gp},d} \left[ R \left( {\cal C}, \mathbf{d} \right) \lvert S_{1,1}, \ldots S_{d_k,f-1}, S_{d_k,f}= \emptyset \right]
         \end{equation}
        
       Clearly, from (\ref{eqn:same}), $\mathbb{E}_{c_{gp},d} \left[ R \left( {\cal C}, \mathbf{d} \right) \lvert S_{1,1}, \ldots S_{d_k,f-1}, S_{d_k,f}= S\right]$ is independent of $S$. Therefore, 
          \begin{align}
            c_{d_k,f} &=  \lvert \mathbb{E}_{c_{op},d} \left[ R \left( {\cal C}, \mathbf{d} \right) \right \lvert S_{1,1}, S_{1,2} \ldots S_{d_k,f}=S]- \mathbb{E}_{c_{op},d} \left[ R \left( {\cal C}, \mathbf{d} \right) \right \lvert S_{1,1}, S_{1,2} \ldots S_{d_k,f}=S'] \rvert \nonumber \\
               \hfill & = \lvert \left( \mathbb{E}_{c_{op},d} \left[ R \left( {\cal C}, \mathbf{d} \right) \right \lvert S_{1,1}, S_{1,2} \ldots S_{d_k,f}=S]-\mathbb{E}_{c_{gp},d} \left[ R \left( {\cal C}, \mathbf{d} \right) \lvert S_{1,1}, \ldots S_{d_k,f-1}, S_{d_k,f}= \emptyset \right] \right) - \nonumber \\
                   \hfill & \left( \mathbb{E}_{c_{op},d} \left[ R \left( {\cal C}, \mathbf{d} \right) \right \lvert S_{1,1}, S_{1,2} \ldots S_{i,f}=S'] - \mathbb{E}_{c_{gp},d} \left[ R \left( {\cal C}, \mathbf{d} \right) \lvert S_{1,1}, \ldots S_{i,f-1}, S_{i,f}= \emptyset \right]   \right)\rvert \nonumber \\
               \hfill & \overset{a}   \leq \frac{K+1}{F}    
          \end{align}   
   Justification: (a): This is because of (\ref{Eqn:diff}) and (\ref{eqn:same}). Applying the Average Lipschitz condition in Lemma \ref{Lem:Azumahoeff} with $c_{d_k,f}$, we have the result claimed.
   \end{proof}
   \textbf{Remark:} As before, when $F = \Omega \left(\frac{1}{\epsilon^2} K^3 \log K \right)$, the rate $R({\cal C},d)$ under old placement is within $(1+\epsilon)$ about the expected value multiplicatively.
   
 \section{File Size Requirements under new and Old Placements }\label{sec:filereq}
 %Now, we need to show that $R_p(M)$ and $\mathbb{E}[R({\cal C},\mathbf{d})]$ are very close.
 
 \subsection{Requirements for Algorithm \ref{alg:NewDelivery} under New Placement} \label{sec:ndnewp}
  Given any cache configuration ${\cal C}$ and demand $\mathbf{d}$, according to Theorem \ref{equality}, the number of transmissions of Algorithm \ref{alg:OldDelivery} and Algorithm \ref{alg:NewDelivery} have identical number of transmissions. Therefore, the expected number of transmissions for Algorithm \ref{alg:OldDelivery} under the new placement algorithm is $\mathbb{E}_{c_{np}} \left[R^{nd} \left( {\cal C},\mathbf{d} \right) \right]$.
  
  Consider any demand distribution for $\mathbf{d}$. Let $\mathbf{1}_{d_k,f,g}^{{\cal S}}$ be the indicator that the packet $f$ ($1 \leq f \leq \lceil\frac{N}{M}\rceil$) of group $g$ in file $d_k$ is placed exactly in the set ${\cal S} \subseteq [1:K]$ of caches. Now, we have :
   \begin{align}
      \mathbb{E} \left[ \mathbf{1}_{d_k,f,g}^{{\cal S}} \right] = \left( \frac{1}{\lceil\frac{N}{M}\rceil} \right)^{\lvert {\cal S} \rvert} \left(1- \frac{1}{\lceil N/M\rceil} \right)^{K - \lvert {\cal S} \rvert}
   \end{align}
   Also, for $\cal S$ that contain $k$, we have:
   \begin{align}
     \lvert V_{k, {\cal S}-k} \rvert = \sum \limits_{g=1}^{F'} \sum \limits_{f=1}^{\lceil N/M\rceil} \mathbf{1}_{d_k,f,g}^{{\cal S}-k}
   \end{align}
   The variables in different groups are independent. $\sum \limits_{f=1}^{\lceil N/M\rceil} \mathbf{1}_{d_k,f,g}^{{\cal S}-k} $ is a bernoulli variable since from every group at most one file bit is stored in a given cache.
   \begin{equation}\label{eqn:mu}
    \mathbb{E}\left[\sum \limits_{f=1}^{\lceil N/M\rceil} \mathbf{1}_{d_k,f,g}^{{\cal S}-k} \right] = \lceil N/M\rceil  \left( \frac{1}{\lceil\frac{N}{M}\rceil} \right)^{\lvert {\cal S}\rvert-1} \left(1- \frac{1}{\lceil N/M\rceil} \right)^{K - \lvert {\cal S} \rvert+1} = \mu (\lvert {\cal S} \rvert). 
    \end{equation} 
    Therefore,  $\lvert V_{k, {\cal S}-k} \rvert $ is a binomial random variable with $F'$ trials and probability of success $\mu$.
   
   The expected number of transmissions for all stages ${\cal S}$ in Algorithm \ref{alg:OldDelivery} with respect to the new placement is given by:
      \begin{equation}\label{eqn:exp}
      \mathbb{E}_{\mathrm{c}_{np}}\left[R^{nd}({\cal C},\mathbf{d})\right]= \sum \limits_{{\cal S} \neq \emptyset}  \frac{\mathbb{E}[\max \limits_{k \in {\cal S}} \lvert V_{k, {\cal S}-k} \rvert]}{F'\lceil\frac{N}{M}\rceil} 
      \end{equation}
   In the limit when $F' \rightarrow \infty$, the binomial random variables are concentrated at their mean and therefore, 
      \begin{equation}
        \lim \limits_{F \rightarrow \infty} \mathbb{E}\left[R^{nd}({\cal C},\mathbf{d})\right] \approx \sum \limits_{{\cal S} \neq \emptyset} \frac{\mu (\lvert {\cal S} \rvert)}{F'\lceil\frac{N}{M}\rceil} = \frac{K \left(1- \frac{1}{\lceil\frac{N}{M}\rceil} \right)}{\left( K \frac{1}{\lceil\frac{N}{M}\rceil}\right)} \left(1- \left( 1- \frac{1}{\lceil\frac{N}{M}\rceil} \right)^K \right) \approx R_p(M)
      \end{equation}
      
    % We are interested in the question whether $\mathbb{E}_{c_{np}}\left[R^{nd}({\cal C},\mathbf{d})\right]$ is close to $R_p(M)$ for finite $F'$ (polynomial in the rest of the parameters). When $F'=1$, we have: 
%     \begin{equation}\label{eqn:bound}
%       \sum \limits_{{\cal S} \neq \emptyset}  \mathbb{E}_{F'=1}[\max \limits_{k \in {\cal S}} \lvert V_{k, {\cal S}-k} \rvert] = \sum \limits_{{\cal S}} \mu (\lvert {\cal S} \rvert) + \Delta({\cal S}) \leq K \lceil\frac{N}{M}\rceil
%     \end{equation}     
% $(\ref{eqn:bound})$ is because the normalized number of transmissions of Algorithm \ref{alg:OldDelivery} and equivalently Algorithm \ref{alg:NewDelivery} is at most $K$.
% 
 % The following conjecture if verified will prove show that $\mathbb{E}\left[R({\cal C},\mathbf{d})\right]$ is close to $R_p(M)$ for finite $F'$.
 
 Now, let us consider the case of distinct demands by all users, denoted by $\mathbf{d}_u$. We assume that $N > K$ here. We are interested in the question: How far are $\mathbb{E}_{c_{np}}\left[R^{nd}({\cal C},\mathbf{d}_u)\right]$ and the limiting peak rate $R_p\left( M \right)$ for finite $F$?

%   \begin{conjecture}
%     Let $\mathrm{Bi} \left( n,p \right)$ be the binomial distribution for $n$ trials and probability of success $p$. Let $X_1,X_2 ... X_s$ be $s$ i.i.d random variables distributed according to $\mathrm{Bi}\left(n,p\right)$. When $n=1$, let $\mathbb{E}[\max \limits_{s} X_s ] = p + \Delta$. Then for $n \geq  c_1s^{c_2}$, $\mathbb{E}[\max \limits_{s} \frac{X_s}{n} ] = p + c_3 \frac{\Delta}{\sqrt{n}}$ for some universal constants $c_1,c_2$ and $c_3$.  
%   \end{conjecture}
   
    Let $\mathrm{Bi} \left( n,p \right)$  be the binomial distribution for $n$ trials and probability of success $p$. When the demands of all users are different, $\lvert V_{k,{\cal S}-k} \rvert$ is distributed according to $\mathrm{Bi}(F', \mu (\lvert {\cal S} \rvert))$ and for different $k$, the binomial random variables are independent. Now, we show that coding gain is roughly at most $2$, even when $F'$ is exponential in the targeted gain $t=\frac{K}{\lceil \frac{N}{M}\rceil}$
    
    \begin{theorem}\label{thm:newpllow}
     Let $N>K$. Then, $ \mathbb{E}_{c_{np}} \left[ R^{nd} \left( {\cal C},\mathbf{d}_{u} \right) \right]  \geq \frac{1}{2} \left( 1- \frac{M}{N} \right) K$ when $F \leq \frac{\lceil N/M \rceil}{2K} \left(1 - \frac{1}{\lceil N/M \rceil} \right) \exp \left( 2t \left( 1 - \frac{t}{K} \right)\left( 1 - \frac{1}{K} \right) \right)$.
    \end{theorem}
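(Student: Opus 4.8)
\emph{Proof plan.} The plan is to decompose the expected rate into an ``uncoded'' term minus the total saving produced by the clique cover, and then to bound that saving by half the uncoded term whenever $F$ is in the stated range. Throughout I write $p=1/\lceil N/M\rceil$, use $F=\lceil N/M\rceil F'$ as in Algorithm~\ref{alg:NewPlacement}, and note that $t=K/\lceil N/M\rceil=Kp$; I may assume $M<N$ (otherwise the caching problem is trivial and $\tfrac12(1-M/N)K\le 0$), so $\lceil N/M\rceil\ge 2$ and $p\le\tfrac12$. Starting from~(\ref{eqn:exp}) together with the fact recorded just above the theorem — that under distinct demands $\mathbf{d}_u$ the lengths $\{|V_{k,{\cal S}-k}|\}_{k\in{\cal S}}$ are i.i.d.\ $\mathrm{Bi}(F',\mu(|{\cal S}|))$ — I would group the $2^K-1$ subsets by their size $s$ and write $\mathbb{E}_{c_{np}}[R^{nd}({\cal C},\mathbf{d}_u)]=\sum_{s=1}^{K}\binom{K}{s}\,\mathbb{E}[\max_{i\le s}X_i]/(F'\lceil N/M\rceil)$ with $X_1,\dots,X_s$ i.i.d.\ $\mathrm{Bi}(F',\mu(s))$. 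Using the pointwise identity $\sum_{i\le s}X_i=\max_{i\le s}X_i+\bigl(\sum_{i\le s}X_i-\max_{i\le s}X_i\bigr)$ and the fact that $\sum_{{\cal S}\ne\emptyset}\sum_{k\in{\cal S}}\mathbb{E}|V_{k,{\cal S}-k}|=KF(1-p)$ (each packet of a requested file that is absent from the requester's own cache is tallied exactly once, in round ${\cal S}=S_{d_k,f}\cup\{k\}$), this becomes $\mathbb{E}_{c_{np}}[R^{nd}({\cal C},\mathbf{d}_u)]=K(1-p)-\Delta$, where $\Delta:=\frac{1}{F'\lceil N/M\rceil}\sum_{s=2}^{K}\binom{K}{s}\,\mathbb{E}[\sum_{i\le s}X_i-\max_{i\le s}X_i]$. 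Since $1-p\ge 1-M/N$, it then suffices to show $\Delta\le\tfrac12 K(1-p)$.

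The core step is a \emph{quadratic} bound on the per-size saving. For i.i.d.\ nonnegative integers $X_1,\dots,X_s$, the difference $\sum_{i\le s}X_i-\max_{i\le s}X_i$ is zero unless at least two of the $X_i$ are positive, hence it is at most $\bigl(\sum_{i\le s}X_i\bigr)\mathbf{1}\{\text{at least two }X_i\ge 1\}$. Bounding, inside the $X_i$-term, that indicator by $\mathbf{1}\{\exists\, j\ne i:X_j\ge 1\}$, using independence, and applying Bernoulli's inequality $1-(1-\mu(s))^{F'(s-1)}\le F'(s-1)\mu(s)$, one obtains
\begin{equation*}
\mathbb{E}\bigl[\,\textstyle\sum_{i\le s}X_i-\max_{i\le s}X_i\,\bigr]\;\le\;s(s-1)\bigl(F'\mu(s)\bigr)^2 .
\end{equation*}
Substituting $\mu(s)=p^{\,s-2}(1-p)^{K-s+1}$ from~(\ref{eqn:mu}) and $\binom{K}{s}s(s-1)=K(K-1)\binom{K-2}{s-2}$, the sum over $s$ collapses by the binomial theorem, $\sum_{s=2}^{K}\binom{K}{s}s(s-1)\mu(s)^2=K(K-1)(1-p)^2\bigl(p^2+(1-p)^2\bigr)^{K-2}$, so that $\Delta\le F'p\,K(K-1)(1-p)^2\bigl(p^2+(1-p)^2\bigr)^{K-2}$.

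Finally I insert the hypothesis. Since $t=Kp$, the exponent in the assumed bound simplifies: $2t\bigl(1-\tfrac tK\bigr)\bigl(1-\tfrac1K\bigr)=2p(1-p)(K-1)$, so $F\le\frac{\lceil N/M\rceil}{2K}(1-p)e^{2p(1-p)(K-1)}$ is equivalent to $F'\le\frac{1-p}{2K}e^{2p(1-p)(K-1)}$. Combining with the bound on $\Delta$ and using $p^2+(1-p)^2=1-2p(1-p)\le e^{-2p(1-p)}$ gives
\begin{equation*}
\Delta\;\le\;\frac{1-p}{2K}e^{2p(1-p)(K-1)}\cdot p\,K(K-1)(1-p)^2 e^{-2p(1-p)(K-2)}\;=\;\tfrac12(K-1)(1-p)\cdot p(1-p)^2 e^{2p(1-p)} .
\end{equation*}
Since $p\le\tfrac12$ we have $p(1-p)^2 e^{2p(1-p)}\le\tfrac12 e^{1/2}<1$, hence $\Delta<\tfrac12(K-1)(1-p)\le\tfrac12 K(1-p)$, which finishes the argument.

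The step I expect to be the main obstacle is that quadratic per-size bound. A ``subset by subset'' approach — trying to lower bound each $\mathbb{E}[\max_{i\le s}X_i]$ by, say, $\tfrac{s}{2}\mathbb{E}[X_1]$ — fails for the large subsets: once $F'\mu(s)$ is not small the $s$ binomials concentrate and $\mathbb{E}[\max_{i\le s}X_i]\approx\mathbb{E}[X_1]$, not $\approx s\,\mathbb{E}[X_1]$. One must instead control the total coding gain directly, and the key realization is that the saving in round ${\cal S}$ is only $O(|{\cal S}|^2(F'\mu(|{\cal S}|))^2)$, i.e.\ quadratically small in $F'\mu(|{\cal S}|)$; that quadratic factor is exactly what lets the sum over all $2^K$ subsets telescope to a quantity that stays below $\tfrac12 K(1-p)$ throughout the sub-exponential range of $F$.
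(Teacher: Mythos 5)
Your proof is correct, and at its core it is the same second-moment (truncated inclusion--exclusion) calculation as the paper's, just packaged differently. The paper lower-bounds each term of (\ref{eqn:exp}) via $\mathbb{E}[\max_{k\in{\cal S}}|V_{k,{\cal S}-k}|]\ge\Pr\bigl(\bigcup_{k,f}Y_{k,f}>0\bigr)$ and then applies a Bonferroni inequality, so the first-order terms sum to $K(1-p)$ with $p=1/\lceil N/M\rceil$ and the pairwise terms give a correction of order $\sum_s\binom{K}{s}s^2(F'\mu(s))^2$; you instead use the exact identity $\mathbb{E}_{c_{np}}[R^{nd}]=K(1-p)-\Delta$ (each needed packet is tallied exactly once in its round, which indeed matches $\sum_s\binom{K}{s}sF'\mu(s)$ computed from (\ref{eqn:mu})) and upper-bound the savings $\Delta$ through the pointwise bound $\sum_iX_i-\max_iX_i\le\sum_iX_i\,\mathbf{1}\{\exists\,j\ne i:X_j\ge1\}$, independence across users (valid because the demands in $\mathbf{d}_u$ are distinct), and Bernoulli's inequality, getting the per-size bound $s(s-1)(F'\mu(s))^2$, which is the paper's $\tfrac12 s^2(F')^2\mu(s)^2$ up to an immaterial constant. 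Your collapse $\binom{K}{s}s(s-1)=K(K-1)\binom{K-2}{s-2}$ makes the sum over $s$ exact and somewhat cleaner than the paper's derivative-identity estimates, and the final bookkeeping ($2t(1-t/K)(1-1/K)=2p(1-p)(K-1)$, $p^2+(1-p)^2\le e^{-2p(1-p)}$, $p\le\tfrac12$ after discharging the trivial case $M\ge N$, and $1-p\ge 1-M/N$) is sound. So there is no gap; what your route buys is an exact uncoded-minus-savings decomposition and marginally tighter constants, while the paper's per-subset union-probability bound reaches the same quadratic correction more directly.
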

   \begin{proof}
       \begin{align}\label{eqn:lower bound}
           \mathbb{E}_{c_{np}} \left[ R^{nd} \left( {\cal C},\mathbf{d}_{u} \right) \right] &=  \sum \limits_{{\cal S} \neq \emptyset} \frac{\mathbb{E} \left[ \max \limits_{k \in {\cal S}}\lvert V_{k,{\cal S}-k} \rvert \right]}{F'\lceil N/M\rceil}  \nonumber \\
            \hfill                                                                                                       & \overset{a}= \sum \limits_{{\cal S} \neq \emptyset} \mathbb{E} \left[ \frac{ \max \limits_{k \in {\cal S}} Y_k }{F'\lceil N/M\rceil}    \right]~~~~~Y_k \sim\mathrm{Bi}(F',\mu(\lvert {\cal S} \rvert)),~ Y_k = \sum \limits_{f=1}^{F'} Y_{k,f} \nonumber \\
            \hfill                                                                                                       &  \geq \sum \limits_{{\cal S} \neq \emptyset} \frac{ \mathrm{Pr} \left( \bigcup \limits_{k \in {\cal S},f \in [1:F']} Y_{k,f} > 0 \right)} { F'\lceil N/M\rceil  } \nonumber \\
           \hfill                                                                                                  & \overset{b}\geq \sum \limits_{{\cal S} \neq \emptyset} \frac{\left[ \sum \limits_{k \in {\cal S}, f \in [1:F']} \mathrm{Pr} \left( Y_{k,f} > 0 \right) - \sum \limits_{i,j \in {\cal S},f_1,f_2 \in [1:F'],(i,f_1) \neq (i,f_2)} \mathrm{Pr} \left( Y_{i,f_1}>0 \bigcap Y_{j,f_2} >0 \right)  \right]} {F'\lceil N/M\rceil}  \nonumber \\          
              %   \hfill                                                                                                 &\geq \frac{\sum \limits_{s=1}^K {K \choose s} \left( 1 - \left( 1- \mu( s )\right)^{sF'}\right)}{F'\lceil\frac{N}{M}\rceil} \nonumber \\
                  \hfill                                                                                                 & \overset{c} \geq \frac{\sum \limits_{s=1}^K {K \choose s} \left( sF' \mu(s) - \frac{1}{2}s^2(F')^2 \left( \mu(s) \right)^2 \right) }{F'\lceil N/M\rceil} \nonumber \\
                   \hfill                                                                                                & \geq   \frac{\left(1 - \frac{1}{\lceil\frac{N}{M}\rceil} \right)}{\frac{1}{\lceil\frac{N}{M}\rceil} } \sum \limits_{s=1}^K {K \choose s} s \left( \frac{1}{\lceil\frac{N}{M}\rceil} \right)^s \left(1 - \frac{1}{\lceil\frac{N}{M}\rceil} \right)^{K-s} - \frac{\sum \limits_{s=1}^K {K \choose s}s^2 F' \left( \mu(s) \right)^2 }{2\lceil\frac{N}{M}\rceil} \nonumber \\
                                             \hfill                                                                             & \geq  K \left( 1- \frac{1}{\lceil N/M\rceil}\right) - \frac{1}{2}F' \lceil\frac{N}{M}\rceil \sum \limits_{s=1}^K {K \choose s}s^2 \left(\frac{1}{\lceil\frac{N}{M}\rceil} \right)^{2(s-1)} \left(1 - \frac{1}{\lceil\frac{N}{M}\rceil} \right)^{2(K-s+1)} \nonumber \\
                                              \hfill                                                                    & \geq K \left(1- \frac{1}{\lceil \frac{N}{M}\rceil} \right)-  \frac{1}{2}\left(1 -\frac{1}{\lceil \frac{N}{M} \rceil} \right)^{2K} F' \lceil\frac{N}{M}\rceil \sum \limits_{s=1}^K {K \choose s} s^2   \left( \frac{ \frac{1}{\lceil\frac{N}{M}\rceil} } { 1 - \frac{1}{\lceil\frac{N}{M}\rceil} } \right)^{2(s-1)}   \nonumber \\
                                                \hfill                                                        & \overset{d} \geq  K \left(1- \frac{1}{\lceil \frac{N}{M}\rceil} \right) \nonumber \\
                                                                                                                       & ~ -  \frac{1}{2}F' \lceil\frac{N}{M}\rceil  \left(K \left( 1 - \frac{1}{\lceil\frac{N}{M}\rceil} \right)^2+K(K-1)\left( \frac{1}{\lceil\frac{N}{M}\rceil}\right)^2\right) \left( \left( 1 - \frac{1}{\lceil\frac{N}{M}\rceil} \right)^2+ \left( \frac{1}{\lceil\frac{N}{M}\rceil}\right)^2  \right)^{(K-1)}   \nonumber \\
                                                        \hfill                                                &  \geq  K \left(1- \frac{1}{\lceil \frac{N}{M}\rceil} \right) -  \frac{1}{2}F' K  \left(\frac{K}{t}+t\right) \exp \left( -2t \left(1-\frac{t}{K} \right) \left(1 -\frac{1}{K} \right) \right)     \nonumber \\   
                                                        \hfill                                               & \overset{e} \geq K \left(1- \frac{1}{\lceil \frac{N}{M}\rceil} \right) - F' K^2 \exp \left( -2t \left(1-\frac{t}{K} \right) \left(1 -\frac{1}{K} \right) \right)                                                                                                                                                                                                   
\end{align}
      (a) This is because every $\lvert V_{k,{\cal S}-k}\rvert$ is a sum of $F'$ independent Bernoullli random variables and the set of Bernoulli variables across different values of $k$ are independent because the demands are distinct. (b) We use the following Bonferroni inequality: $\mathrm{Pr} \left( \bigcup \limits_{i=1}^n A_i \right) \geq \sum \limits_{i=1}^n \mathrm{Pr}\left(A_i \right) - \sum \limits_{i \neq j} \mathrm{Pr}\left(A_i \bigcap A_j \right)$. (c) $\mu(s)$ is defined in (\ref{eqn:mu}) and $Y_{i,f_1}$ and $Y_{i,f_2}$ are independent if $(i,f_1) \neq (i,f_2)$.  (d) We use: $\sum \limits_{s \geq 1} {K \choose s} s^2 p^{s-1} = \frac{d}{dp} \left[ p \frac{d}{dp} \left(1+p \right)^k \right]$ and $(1+p)^{K-2} \leq (1+p)^{K-1}$ for $p>0$ and further simplification. (e) We use: $1-x \leq \exp (-x),~\forall x>0$.
    
    This implies that when $F' \leq \frac{1}{2K} \left(1 - \frac{1}{\lceil N/M \rceil} \right) \exp \left( 2t \left( 1 - \frac{t}{K} \right)\left( 1 - \frac{1}{K} \right) \right)$, the expected number of normalized transmissions for Algorithm \ref{alg:OldDelivery} for distinct requests under the new placement scheme given by Algorithm \ref{alg:NewPlacement} is at least $\frac{1}{2} \left( 1- \frac{M}{N} \right) K$. 
        \end{proof}
   Therefore, there is very little coding gain if we do not have exponential number of file packets (exponential in $t$).
     
    \subsection{Requirements for Algorithm \ref{alg:NewDelivery} under Old Placement}   
          Let $R^{nd} \left( {\cal C},\mathbf{d} \right)$ denote the normalized number of transmissions for the new delivery scheme. Let $\mathbf{1}_{d_k,f}^{{\cal S}}$ be the indicator random variable that bit $f$ of file $d_k$ is stored exactly in user caches in the set ${\cal S} \subset [1:K]$. When $k \in {\cal S}$, let us define: 
           \begin{equation}
             \lvert V_{k,{\cal S}-k}\rvert =\sum \limits_{f=1}^F \mathbf{1}_{d_k,f}^{{\cal S}-k}
           \end{equation}
    Here, $\mathbb{E}_{c_{op}}\left[ \mathbf{1}_{d_k,f}^{{\cal S}-k} \right]=  \left( \frac{1}{\frac{N}{M} }\right)^{\lvert {\cal S}\rvert-1} \left(1- \frac{1}{ N/M} \right)^{K - \lvert {\cal S} \rvert+1} = \mu' (\lvert {\cal S} \rvert)$. Consider the case when user demands are distinct (implicitly $N>K$). The following theorem shows that the coding gain is at most $2$ even when the file size is exponential in the targeted gain $t=\frac{K}{N/M}$.
    
    \begin{theorem}
     Let $N>K$. Then, $ \mathbb{E}_{c_{op}} \left[ R^{nd} \left( {\cal C},\mathbf{d}_{u} \right) \right]  \geq \frac{1}{2} \left( 1- \frac{M}{N} \right) K$ when $F \leq \frac{N/M}{2K} \left(1 - \frac{1}{ N/M} \right) \exp \left( 2t \left( 1 - \frac{t}{K} \right)\left( 1 - \frac{1}{K} \right) \right)$.
    \end{theorem}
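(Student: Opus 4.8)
The plan is to follow the proof of Theorem~\ref{thm:newpllow} almost line by line, the only structural change being that under the old placement (Algorithm~\ref{alg:OldPlacement}) the $F$ packet‑indicators of one file inside one cache come from a uniformly chosen $MF/N$‑subset, hence are negatively correlated, rather than being split into independent groups as in Algorithm~\ref{alg:NewPlacement}. First I would write $\mathbb{E}_{c_{op}}\left[R^{nd}({\cal C},\mathbf{d}_{u})\right]=\sum_{{\cal S}\neq\emptyset}\mathbb{E}\left[\max_{k\in{\cal S}}\lvert V_{k,{\cal S}-k}\rvert\right]/F$, with $\lvert V_{k,{\cal S}-k}\rvert=\sum_{f=1}^{F}\mathbf{1}_{d_k,f}^{{\cal S}-k}$ and $\mathbb{E}_{c_{op}}\left[\mathbf{1}_{d_k,f}^{{\cal S}-k}\right]=\mu'(\lvert{\cal S}\rvert)$. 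Because $\mathbf{d}_{u}$ lists distinct files, the indicator families attached to different users involve different files and are mutually independent. Exactly as in step (b) of~(\ref{eqn:lower bound}), I would lower bound $\max_{k\in{\cal S}}\lvert V_{k,{\cal S}-k}\rvert$ by the indicator of $\bigcup_{k\in{\cal S},\,f\in[1:F]}\left\{\mathbf{1}_{d_k,f}^{{\cal S}-k}=1\right\}$, so that $\mathbb{E}\left[\max_{k\in{\cal S}}\lvert V_{k,{\cal S}-k}\rvert\right]$ is at least the probability of this union, and then apply the Bonferroni inequality used there, which needs no independence among the events.

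The one place the argument genuinely departs from Theorem~\ref{thm:newpllow}, and the step I expect to be the main obstacle, is bounding the pairwise intersections $\mathrm{Pr}\left(\mathbf{1}_{d_i,f_1}^{{\cal S}-i}=1,\ \mathbf{1}_{d_j,f_2}^{{\cal S}-j}=1\right)$ appearing in the Bonferroni sum, now that the packet‑indicators of a common file are no longer independent. I would show every such term is at most $\mu'(\lvert{\cal S}\rvert)^{2}$. For $i\neq j$ this holds with equality by independence across files. For $i=j$ with $f_1\neq f_2$ two distinct packets of the same file, I would argue cache by cache (placements in distinct caches being independent): in each of the $\lvert{\cal S}\rvert-1$ caches of ${\cal S}-i$ the probability that both packets are chosen is $\frac{(MF/N)(MF/N-1)}{F(F-1)}\leq\left(M/N\right)^{2}$, and in each of the remaining $K-\lvert{\cal S}\rvert+1$ caches the probability that neither is chosen is $\frac{(F-MF/N)(F-MF/N-1)}{F(F-1)}\leq\left(1-M/N\right)^{2}$; multiplying gives at most $\left(M/N\right)^{2(\lvert{\cal S}\rvert-1)}\left(1-M/N\right)^{2(K-\lvert{\cal S}\rvert+1)}=\mu'(\lvert{\cal S}\rvert)^{2}$. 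In other words sampling without replacement only helps relative to the independent model, and one obtains $\mathbb{E}\left[\max_{k\in{\cal S}}\lvert V_{k,{\cal S}-k}\rvert\right]\geq\lvert{\cal S}\rvert F\,\mu'(\lvert{\cal S}\rvert)-\tfrac12\lvert{\cal S}\rvert^{2}F^{2}\,\mu'(\lvert{\cal S}\rvert)^{2}$.

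From here everything is a transcription of the last several lines of~(\ref{eqn:lower bound}) with $\lceil N/M\rceil$ replaced by $N/M$. Summing over ${\cal S}$ with the weights ${K\choose \lvert{\cal S}\rvert}$ and dividing by $F$, the leading term gives $\sum_{s=1}^{K}{K\choose s}\,s\,\mu'(s)=\frac{1-M/N}{M/N}\,K\frac{M}{N}=K\left(1-\frac{M}{N}\right)$, while the second (negative) term is controlled using $\sum_{s\geq1}{K\choose s}s^{2}p^{s-1}=\frac{d}{dp}\!\left[p\frac{d}{dp}(1+p)^{K}\right]$, $(1+p)^{K-2}\leq(1+p)^{K-1}$ and $1-x\leq e^{-x}$, now with base probability $M/N$ and $t=KM/N$; this produces a bound that is a polynomial in $K$ times $F\exp\left(-2t\left(1-\tfrac{t}{K}\right)\left(1-\tfrac1K\right)\right)$. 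Hence $\mathbb{E}_{c_{op}}\left[R^{nd}({\cal C},\mathbf{d}_{u})\right]\geq K\left(1-\frac{M}{N}\right)-\left(\text{poly in }K\right) F\exp\left(-2t\left(1-\tfrac{t}{K}\right)\left(1-\tfrac1K\right)\right)$, and rearranging shows the right‑hand side stays at least $\tfrac12\left(1-\frac{M}{N}\right)K$ for every $F$ below the threshold in the statement. Thus the whole argument after the pairwise bound is routine; the only genuinely new point is verifying that negative association does not break that bound.
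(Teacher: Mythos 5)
Your proposal is correct and follows essentially the same route as the paper: the same decomposition over subsets ${\cal S}$, the same Bonferroni lower bound, and the same key observation that for two packets of a common file the without-replacement placement makes the pairwise probability at most $\left(\mu'(\lvert{\cal S}\rvert)\right)^{2}$, after which the tail computation is transcribed from Theorem~\ref{thm:newpllow}. The only cosmetic difference is that you verify the pairwise bound cache-by-cache (via $\frac{(MF/N)(MF/N-1)}{F(F-1)}\leq (M/N)^2$ and its complement), whereas the paper writes it as $\mu'(s)$ times a conditional placement probability; and to recover the exact threshold constant you should retain the bound $K+t^{2}\leq 2Kt$ rather than a generic polynomial in $K$, exactly as the paper does.
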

    \begin{proof}
     We have the following chain:
            \begin{align}\label{eqn:lower bound2}
           \mathbb{E}_{c_{op}} \left[ R^{nd} \left( {\cal C},\mathbf{d}_{u} \right) \right] &=  \sum \limits_{{\cal S} \neq \emptyset} \frac{\mathbb{E} \left[ \max \limits_{k \in {\cal S}} \lvert V_{k,{\cal S}-k} \rvert \right]}{F}  \nonumber \\
            \hfill                                                                                                       & = \sum \limits_{{\cal S} \neq \emptyset} \mathbb{E} \left[ \frac{ \max \limits_{k \in {\cal S}} \sum \limits_{f=1}^F \mathbf{1}_{d_k,f}^{{\cal S}-k}}{F}    \right] ~~\nonumber \\
            \hfill                                                                                                       &  \geq \sum \limits_{{\cal S} \neq \emptyset} \frac{ \mathrm{Pr} \left( \bigcup \limits_{k \in {\cal S},f \in [1:F]}  \mathbf{1}_{d_k,f}^{{\cal S}-k}> 0 \right)} { F} \nonumber \\
           \hfill                                                                                                  & \geq \sum \limits_{{\cal S} \neq \emptyset} \frac{\left[ \sum \limits_{k \in {\cal S}, f \in [1:F]} \mathrm{Pr} \left( \mathbf{1}_{d_k,f}^{{\cal S}-k}> 0 \right) - \sum \limits_{i,j \in {\cal S},f_1,f_2 \in [1:F], (i,f_1) \neq (j,f_2)} \mathrm{Pr} \left( \mathbf{1}_{d_i,f_1}^{{\cal S}-i}>0 \bigcap  \mathbf{1}_{d_j,f_2}^{{\cal S}-j} >0 \right)  \right]} {F}  \nonumber \\          
              %   \hfill                                                                                                 &\geq \frac{\sum \limits_{s=1}^K {K \choose s} \left( 1 - \left( 1- \mu( s )\right)^{sF'}\right)}{F'\lceil\frac{N}{M}\rceil} \nonumber \\
                  \hfill                                                                                                 & \overset{a} \geq \frac{\sum \limits_{s=1}^K {K \choose s} \left( sF \mu'(s) - \frac{1}{2}s^2(F)^2 \left( \mu'(s) \right)^2 \right) }{F} \nonumber \\
                                                \hfill                                                        & \overset{b} \geq  K \left(1- \frac{M}{N} \right) \nonumber \\
                                                                                                                       & ~ -  \frac{1}{2}F \left(K \left( 1 - \frac{M}{N} \right)^2+K(K-1)\left( \frac{M}{N}\right)^2\right) \left( \left( 1 - \frac{M}{N} \right)^2+ \left( \frac{M}{N}\right)^2  \right)^{(K-1)}   \nonumber \\
                                                        \hfill                                                &  \geq  K \left(1- \frac{1}{\frac{N}{M}} \right) -  \frac{1}{2}F \left(K+t^2\right) \exp \left( -2t \left(1-\frac{t}{K} \right) \left(1 -\frac{1}{K} \right) \right)     \nonumber \\   
                                                        \hfill                                               & \overset{c} \geq K \left(1- \frac{1}{ \frac{N}{M}} \right) - F Kt \exp \left( -2t \left(1-\frac{t}{K} \right) \left(1 -\frac{1}{K} \right) \right)                                                                                                                                                                                                   
\end{align}
    Justifications are : 
    
    (a)- Since the demands are distinct, when $i \neq j$, $d_i \neq d_j$. Therefore the corresponding indicators are independent. Therefore, $ \mathrm{Pr} \left( \mathbf{1}_{d_i,f_1}^{{\cal S}-i}>0 \bigcap  \mathbf{1}_{d_j,f_2}^{{\cal S}-j} >0 \right) =  \mathrm{Pr} \left( \mathbf{1}_{d_i,f_1}^{{\cal S}-i}>0 \right)\mathrm{Pr} \left( \mathbf{1}_{d_j,f_2}^{{\cal S}-j}>0 \right),~i \neq j $. This probability is easily seen to be $\left( \mu'(s) \right)^2$. When $i=j$ and $f_1 \neq f_2$, we have: 
     \begin{align}
        \mathrm{Pr} \left( \mathbf{1}_{d_i,f_1}^{{\cal S}-i}>0 \bigcap  \mathbf{1}_{d_i,f_2}^{{\cal S}-j} >0 \right) & =  \mu'(s) \left(  \frac{M-\frac{1}{F}}{N- \frac{1}{F}}\right)^{s-1}  \left( 1- \frac{M}{N- \frac{1}{F}}\right)^{K-s+1}  \nonumber \\
          \hfill & \overset{d} \leq \left( \mu'(s) \right)^2
     \end{align}
    (d) is because $\frac{M- \frac{1}{F}}{N- \frac{1}{F}} \leq \frac{M}{N},~ M \leq N$. Step (a) (and its justification) in the above chain of inequalities is the main difference between old placement ($c_{op}$) and new placement ($c_{np}$).
    
    (b)- This follows the exact same derivation as in the proof of Theorem \ref{thm:newpllow} except for the factor $\lceil N/M \rceil$.
    
   (c)- This follows from: $K+t^2 \leq 2Kt$.
   
   This implies that when $F \leq \frac{1}{2t} \left(1 - \frac{1}{\lceil N/M \rceil} \right) \exp \left( 2t \left( 1 - \frac{t}{K} \right)\left( 1 - \frac{1}{K} \right) \right)$, the expected number of normalized transmissions for Algorithm \ref{alg:OldDelivery} for distinct requests under the old placement scheme given by Algorithm \ref{alg:OldPlacement} is at least $\frac{1}{2} \left( 1- \frac{M}{N} \right) K$. This implies that there is very little coding gain ($t = K \frac{M}{N}$) even when we have file size exponential in $t$.
    \end{proof}
    \subsection{Requirements for any Clique Cover Delivery Scheme}
    
  Let $c_{up}$ denote a random independent and symmetric placement algorithm that has the following properties:
  \begin{enumerate}
   \item For any packet $(n,f)$, the probability of placing this in a user cache $k$ is independent of placing it in all other caches. 
    \item Placing of packets belonging to different files in the same cache is independent. 
    \item The probability of placing a packet equals $M/N$ for a given cache. 
    \end{enumerate}
    Now, we have the following result on any clique cover scheme on the side information graph induced by random caching algorithm $c_{up}$ and a unique set of demands $\mathbf{d}_u$.  
    \begin{theorem}
       When user demands are distinct, for any clique cover algorithm on the side information graph induced by the random cache configuration due to $c_{up}$, if $ \mathbb{E}_{c_{up}} \left(R \left( {\cal C}, \mathbf{d}_u \right) \right) \leq \frac{K(1-M/N)}{\frac{4}{3}g}$ for any $g>2$, then we need the number of file packets $F \geq  \frac{g}{2et} \left( \frac{N}{M} \right)^{g-2}$ where $t=KM/N$.
 Clearly, these bounds apply to both $c_{op}$ and $c_{np}$.
     \end{theorem}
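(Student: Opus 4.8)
The plan is to prove the contrapositive. Write $q=N/M$, $p=M/N=1/q$, $t=KM/N=K/q$, so that the claimed threshold is $\tfrac{g}{2et}(N/M)^{g-2}=\tfrac{g\,q^{g-1}}{2eK}$. Thus it suffices to show that if $F<\tfrac{g\,q^{g-1}}{2eK}$ then $\mathbb{E}_{c_{up}}[R({\cal C},\mathbf{d}_u)]>\tfrac{K(1-M/N)}{(4/3)g}=\tfrac{3K(1-M/N)}{4g}$, contradicting the hypothesis. Fix the clique cover the algorithm outputs on the side information graph $G$ induced by ${\cal C}$ and $\mathbf{d}_u$; since any clique cover can be pruned to a partition of $V$ into cliques without increasing the number of cliques, we may assume the cover is such a partition, and then the number of transmissions is at least the number $r$ of partition-cliques that meet the set $\mathcal{V}:=\{(d_k,f):k\notin S_{d_k,f}\}$ of useful (not-yet-cached) vertices. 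Since each of the $KF$ demanded packets lies in $\mathcal V$ with probability $1-p$ (property 3 of $c_{up}$), $\mathbb{E}[n]=KF(1-p)$ where $n:=|\mathcal V|$.

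The first substantive step is a union bound over cliques. I would note that any clique of $G$ containing a useful vertex has all its vertices on distinct users (two useful packets of the same user's requested file cannot span a bidirectional edge). A size-$s$ clique on distinct users $k_1,\dots,k_s$ with packet indices $f_1,\dots,f_s$ exists exactly when, for every ordered pair $i\neq j$, packet $f_j$ of file $d_{k_j}$ is cached at user $k_i$; as the demands $\mathbf{d}_u$ are distinct, these are $s(s-1)$ distinct (file-packet, cache) incidences, hence by properties 1--3 of $c_{up}$ independent events of probability $p$ each — note that only one packet per file enters a clique, so no within-file joint law of $c_{up}$ is ever invoked. Therefore the expected number $C_s$ of size-$s$ distinct-user cliques in $G$ satisfies
\[
C_s\le\binom{K}{s}F^s p^{s(s-1)}=\binom{K}{s}\bigl(Fp^{s-1}\bigr)^{s}\le\left(\frac{eKF}{s\,q^{s-1}}\right)^{\!s},
\]
using $\binom{K}{s}\le(eK/s)^s$. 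Plugging in $F<\tfrac{g\,q^{g-1}}{2eK}$ gives $\tfrac{eKF}{s\,q^{s-1}}<\tfrac{g}{2s}\,q^{\,g-s}$, and for every integer $s\ge g$ (recall $q\ge1$) both $q^{\,g-s}\le 1$ and $g/(2s)\le 1/2$, so $C_s<2^{-s}$. Hence, with $X_s$ the number of size-$s$ distinct-user cliques and $T:=\sum_{s\ge g}s\,X_s$ the total number of vertices they occupy, $\mathbb{E}[T]=\sum_{s\ge g}sC_s<\sum_{s\ge g}s\,2^{-s}=(g+1)2^{1-g}$.

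I would then close as follows. In the partition, restrict each clique to $\mathcal V$; the nonempty restrictions partition $\mathcal V$, are distinct-user cliques, and number $r$. The restrictions of size $\ge g$ are pairwise disjoint (hence pairwise distinct) members of the family of all size-$\ge g$ distinct-user cliques of $G$, so together they cover at most $T$ vertices of $\mathcal V$; the remaining at least $n-T$ vertices are covered by cliques of size $\le g-1$, forcing $r\ge(n-T)/(g-1)$ pointwise. Taking expectations,
\[
\mathbb{E}[R]\;\ge\;\frac{\mathbb{E}[n]-\mathbb{E}[T]}{(g-1)F}\;\ge\;\frac{K(1-M/N)}{g-1}-\frac{(g+1)2^{1-g}}{(g-1)F}.
\]
Because $\tfrac{1}{g-1}-\tfrac{3}{4g}=\tfrac{g+3}{4g(g-1)}>0$ and, using $F\ge1$, the subtracted correction is exponentially small in $g$ and dominated by this gap in the non-degenerate regime (e.g. $K(1-M/N)\ge 2$), the right side exceeds $\tfrac{3K(1-M/N)}{4g}$, contradicting the hypothesis and proving $F\ge\tfrac{g}{2et}(N/M)^{g-2}$. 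The gain $\tfrac43 g$ (rather than $g$) is chosen precisely to leave the constant-factor room $\tfrac{g+3}{4g(g-1)}$ that absorbs the correction. The same computation applies to $c_{op}$ (with $p=M/N$) and to $c_{np}$ (with $p=1/\lceil N/M\rceil\le M/N$, which only strengthens every inequality, so the bound as stated holds a fortiori).

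The step I expect to be the main obstacle is the clique count: getting the union bound $C_s\le\binom{K}{s}F^s p^{s(s-1)}$ with the correct exponent $s(s-1)$ and cleanly justifying the independence of those $s(s-1)$ caching events from only the three defining properties of $c_{up}$ — this is exactly where "one packet per file per clique" is used, and where the exponent $p^{s(s-1)}$ (as opposed to a weaker $p^{s}$-type estimate) is indispensable for producing a threshold that grows like $(N/M)^{g}$ rather than something far weaker. A secondary nuisance, which I would dispatch briefly, is the degenerate parameter range (very small $F$ or very small $K(1-M/N)$), where the hypothesis $\mathbb{E}[R]\le\tfrac{K(1-M/N)}{(4/3)g}$ is itself essentially unsatisfiable and the statement is vacuous.
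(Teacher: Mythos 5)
Your proposal is correct in substance, but it takes a genuinely different route from the paper, while sharing the same key ingredient: the first-moment clique count $\mathbb{E}[X_s]\le\binom{K}{s}F^{s}p^{s(s-1)}$, whose justification (distinct users, distinct demands, one packet per file per clique, so only properties 1--3 of $c_{up}$ are needed) is exactly the paper's justification of step (b) for $s=g$. The paper wraps this count differently: from the hypothesis it applies Markov's inequality to get $\mathrm{Pr}\bigl(R\le\frac{K(1-M/N)}{g}\bigr)\ge\frac14$, asserts that such a low rate forces the existence of at least one clique of size $g$, and then shows $\mathrm{Pr}(\exists\,g\text{-clique})\le\bigl(\tfrac{etF}{g}\bigr)^{g}\bigl(\tfrac{t}{K}\bigr)^{g(g-2)}<\frac14$ when $F<\frac{g}{2et}(N/M)^{g-2}$, yielding the contradiction in a few lines. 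You instead bound $\mathbb{E}[R]$ directly: prune the cover to a partition, note that cliques meeting the useful set $\mathcal V$ have distinct users, charge the vertices of $\mathcal V$ not lying in size-$\ge g$ cliques to cliques of size $\le g-1$, and control $\mathbb{E}[T]=\sum_{s\ge g}s\,\mathbb{E}[X_s]<(g+1)2^{1-g}$. What each buys: the paper's argument is shorter, but its step ``$R\le\frac{K(1-M/N)}{g}$ implies a $g$-clique exists'' silently replaces the random number of non-locally-cached vertices by its mean $KF(1-M/N)$; your expectation-level covering argument uses $\mathbb{E}[|\mathcal V|]=KF(1-M/N)$ exactly and so avoids that gloss. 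The price is your final numerical comparison, which closes only when $K(1-M/N)$ exceeds an absolute constant (about $2$ at $g=3$, since you need $K(1-M/N)\frac{g+3}{4g}>\frac{(g+1)2^{1-g}}{F}$); you flag this, but it deserves an explicit two-line check rather than the label ``vacuous'': when $K(1-M/N)\le 2$ and the hypothesis is satisfiable (which forces $K\ge\frac43 g$, hence $N/M\le\frac{2g}{2g-3}$, using $\mathbb{E}[R]\ge 1-M/N$), the claimed threshold $\frac{g}{2et}(N/M)^{g-2}=\frac{g(N/M)^{g-1}}{2eK}$ is itself below $1$, so $F\ge 1$ already satisfies the theorem and there is nothing to prove. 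With that check added (and $\ge 2$ sharpened to $>2$ for $g=3$), your argument is complete and, if anything, slightly more rigorous than the paper's.
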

    \begin{proof}
       We show this by contradiction. Let us assume that $ \mathbb{E}_{c_{up}} \left(R \left( {\cal C}, \mathbf{d}_u \right) \right) \leq \frac{K(1-M/N)}{\frac{4}{3}g}$. This implies: $\mathrm{Pr}_{c_{up}} \left( R \left( {\cal C},\mathbf{d}_u \right) \leq \frac{K (1-M/N)}{g} \right) \geq \frac{1}{4}$ (by Markov's Inequality). The number of transmissions $R \left( {\cal C}, \mathbf{d}_u \right) \leq \frac{K(1-M/N)}{g}$ implies that there is at least there is one clique of size $g$ in the side information graph $G$ induced by ${\cal C}$ and $\mathbf{d}_u$. Given cache configuration ${\cal C}$ and distinct demands $\mathbf{d}_u$, let $n_{g}$ denote the number of distinct cliques of size $g$.   So we have the following chain of inequalities:
        \begin{align}
            \mathrm{Pr}_{c_{up}} \left( R \left( {\cal C},\mathbf{d}_u \right) \leq \frac{K (1-M/N)}{g} \right) & \leq \mathrm{Pr}_{c_{up}} \left( \mathrm{there~is~one~clique~of~size~}g \right) \nonumber \\
             \hfill                                                                                                                                          &  \overset{a} \leq \mathbb{E}_{c_{up}} \left( n_g \right)  \nonumber \\
             \hfill                                                                                                                                          & \overset{b} \leq {K \choose g} F^g \left(\frac{M}{N} \right)^{g(g-1)} \nonumber \\
             \hfill                                                                                                                                           & \overset{c} \leq \left( \frac{Ke}{g} \right)^g F^g \left(\frac{M}{N} \right)^{g(g-1)} \nonumber \\
              \hfill                                                                                                                                          & \leq \left( \frac{Ket}{gt} \right)^g F^g \left(\frac{t}{K} \right)^{g(g-1)} \nonumber \\
              \hfill                                                                                                                                         & \leq    \left(  \frac{etF}{g}\right)^g \left( \frac{t}{K}\right)^{g(g-2)} \label{eqn:probbound}           
        \end{align}
      When $F < \frac{g}{2et} \left( \frac{N}{M} \right)^{g-2} $ and $g>2$, then probability given by (\ref{eqn:probbound}) is strictly less than $1/4$ contradicting the assumption. Therefore, the desired implication follows.  Justifications are: (a) $Pr\left( X \geq 1 \right) \leq \mathbb{E}[X]$. (b) There are ${K \choose g}$ ways of choosing $g$ users caches. Since all demands are distinct, there are $F^g$ ways of choosing $g$ file packets belonging to the files requested by the chosen users. $\left( M/N \right)^{g-1}$ is the probability that a file packet wanted by one of the users is present in $g-1$ other user caches. Since the demands are distinct and placement of packets belonging to different files are different, the probability of forming a $g$-clique is given by $\left( M/N \right)^{g(g-1)}$. (c) ${K \choose g} \leq \left( \frac{Ke}{g} \right)^g$.
    \end{proof}
    \textbf{Note:} We would like to note that $c_{up}$ represents a broad set of schemes where every file packet is placed in a cache independently of its placement elsewhere and no file packet is given undue importance over other packets belonging to the same file.
    
    \section{Efficient Achievable Schemes} \label{sec:effachiev}
    
    \subsection{Deterministic Caching Scheme with User Grouping:}
    
    Now, briefly we would like to explore what can be said about the file size requirements of deterministic placement schemes. In this section, we describe a variation of the deterministic caching scheme in \cite{maddah2013fundamental} that requires a similar file size requirement as the previous section for a target gain of $g$. However, it is not clear if, for a clique cover scheme at the delivery stage, this is the best one can do with deterministic caching schemes.  In other words, a lower bound for deterministic caching scheme similar to the one above is not known.
    
       Now, we give a description of a deterministic caching and delivery scheme that requires $F = {K \choose g } $ packets to get a gain of $g+1$. This follows directly from the deterministic scheme of \cite{maddah2013fundamental}.
   For ease of exposition we describe it here: For every file, split the file into ${K \choose g}$ packets. For every subset $G \subset [1:K]$ such that $\lvert G \rvert=g$, we place the corresponding packet in the user caches in the subset $G$. The total number of files per user cache is $N \frac{{K-1 \choose g-1}}{{K \choose g}} = \frac{g N}{K} \leq M $. This satisfies the memory constraint because the gain $g \leq KM/N$. Following the same arguments in \cite{maddah2013fundamental}, it is easy to show that the peak transmission rate is at most : $\frac{K-g}{g+1}$.  
   
    Now, we show a slight modification of the deterministic caching scheme mentioned above which (approximately order wise) matches the lower bound in the previous section. Let us divide the users into groups of size $K'=g \lceil N/M\rceil$ and then apply the caching and delivery scheme for each group separately. The number of file packets required is $F= {K' \choose g}$. The memory constraint would be satisfied when $g \leq K'M/N=g \lceil N/M\rceil \left(M/N \right)$ which is true. Now, coded multicasting is done within every user group. The total number of transmissions is: $ \frac{K}{K'} \frac{K'-g}{g+1} = \frac{K}{g+1}\left( 1- \frac{1}{\lceil N/M \rceil} \right)$. This requires ${K' \choose g}= O \left( \left( \lceil N/M\rceil e \right)^g \right)$ packets.
   
   \subsection{New Randomized Delivery scheme}
      For the deterministic scheme described previously, similar to the one in \cite{maddah2013fundamental}, it is necessary to refresh (possibly) all the caches in a specific way when users leave or join the system that requires coordination among the caches. Now, we show that under an uncoordinated random caching scheme given by the new placement scheme in Algorithm \ref{alg:NewPlacement} and a new randomized clique cover algorithm, it is possible to have an average peak rate (with respect to all the randomness) of about $\frac{K}{g+1}$ when $F =O\left( g{K \choose g } \log K \right) $.  First, we introduce the new randomized delivery algorithm that we use to prove the above assertion. The new randomized delivery algorithm has a preprocessing step, that we call the `pull-down phase', in addition to Algorithm \ref{alg:NewDelivery}.
      
     \begin{algorithm}
      \KwIn{Parameters $K,M,N,g$ and $F$, caches for all users $k \in [1:K]$ and demand set $\mathbf{d}=\left[d_1,d_2 \ldots d_K \right]$.}
      Let $S_{d_k,f} \subseteq [1:K],~\forall k \in [1:K],~f \in [1:F]$ be the exact subset of users in which the $f$-th packet of file requested by user $k$ is stored. \\
       \For {$(d_k,f) \in [1:K] \times [1:F] $}
        { 
           \If{$ \lvert S_{d_k,f} \rvert \geq g+1 $}
           {  $S_{d_k,f} \leftarrow$ a random $g$-subset of $S_{d_k,f}$           
           }                   
        }               
      Run Algorithm \ref{alg:NewDelivery} with this new cache configuration.  
   \caption{ModifiedDelivery}
   \label{alg:ModifiedDelivery}
   \end{algorithm} 
   
   \textbf{Remark:} Algorithm \ref{alg:ModifiedDelivery} emulates a virtual alteration of the cache configuration. The change in $S_{d_k,f}$ happens in such a way that the algorithm pretends that a file packet is being stored in a subset of a set of caches where it has been actually stored. We use the same notation $S_{d_k,f}$ to represent such a `virtual cache configuration' that will be used for the delivery. For example, if a particular packet was stored in caches $\{1,2,3,4,5,6\}$ and if $g=3$, a random subset from this is chosen. So the resultant virtual cache configuration could be $\{1,2,3\}$ after this virtual re-assignment. The re-assignment phase is what we call the `pull down' phase. This will allow us to `target' the gain $g$ (which is typically a lot lesser compared to $t=KM/N$) more effectively if we use Algorithm \ref{alg:ModifiedDelivery} for delivery.
   
   Let ${\cal R}^{md}(\cal C,\mathbf{d})$ be the random number of transmissions under Algorithm \ref{alg:ModifiedDelivery} given a fixed cache configuration ${\cal C}$ and demand pattern $\mathbf{d}$. In this case, there is further randomness that is a part of the delivery phase. Let $\mathbb{E}^{md}\left( {\cal R}^{md}(\cal C,\mathbf{d}) \right)$ denote the expected number of transmissions with respect to the randomness in Algorithm \ref{alg:ModifiedDelivery}.
      
   We need the following lemma from \cite{raab1998balls} (see proof of Theorem 1).   
  \begin{lemma} \label{ballsbinstheorem}
   \cite{raab1998balls} Consider $m$ balls being thrown randomly uniformly and independently into $n$ bins. When $m= r(n) n \log n $ where $r(n)$ is $O ((log (n))^p)$ for some positive integer $p$, then maximum number of balls in any bin is at most $r(n) \log n (1+ 2 \frac{\sqrt{2}}{r(n)} )$ with probability at least $1- \frac{1}{n^2}$.
  \end{lemma}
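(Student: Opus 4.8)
The plan is to bound the maximum load via a single-bin tail estimate combined with a union bound over the $n$ bins. Fix a bin $j$; since each of the $m$ balls lands in bin $j$ independently with probability $1/n$, its load $B_j$ is $\mathrm{Binomial}(m,1/n)$ with mean $\mu := m/n = r(n)\log n$. The loads $B_1,\dots,B_n$ are not independent (they are multinomial, hence negatively associated), but this is irrelevant: for $\max_j B_j$ we only invoke Boole's inequality, which needs nothing beyond the exact marginal law of a single $B_j$.

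The next step is to apply a Chernoff (or Bernstein) upper-tail bound to $B_j$. Taking the threshold to be $\mu + \lambda$ with $\lambda = 2\sqrt{2\mu\log n} = 2\sqrt{2r(n)}\,\log n$, one has $\mu + \lambda = r(n)\log n\bigl(1 + 2\sqrt{2/r(n)}\bigr)$, which is the bound claimed on the maximum load. The Bernstein form gives $\Pr[B_j \geq \mu + \lambda] \leq \exp\!\bigl(-\lambda^2/(2\mu + \tfrac23\lambda)\bigr)$; equivalently the multiplicative Chernoff bound gives $\exp\!\bigl(-\mu\,\varphi(\delta)\bigr)$ with $\delta = \lambda/\mu = 2\sqrt{2/r(n)}$ and $\varphi(\delta) = (1+\delta)\ln(1+\delta) - \delta$. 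Using $\lambda^2 = 8\mu\log n$ and bounding the denominator (equivalently $\mu\,\varphi(\delta)$) below by an absolute multiple of $\log n$ yields a per-bin failure probability at most $n^{-c}$ with $c \geq 3$, and then a union bound over the $n$ bins gives $\Pr[\max_j B_j > \mu + \lambda] \leq n^{1-c} \leq n^{-2}$, as claimed. The hypothesis $r(n) = O((\log n)^p)$ is exactly what keeps everything on the scale $\log n$: it forces $\log m = \log n + \log r(n) + \log\log n = \Theta(\log n)$, so the $\log n$ appearing in the deviation term is interchangeable up to constants with the $\log m$ the Chernoff exponent naturally produces, and it makes the lower-order corrections negligible against the leading term.

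The obstacle is purely quantitative: one must pin down the constant in $\lambda = 2\sqrt{2\mu\log n}$ so that the Chernoff exponent is at least $(2+\varepsilon)\log n$ \emph{uniformly} over the permitted range of $r(n)$. For large $r(n)$ the deviation $\delta$ is small and one sits in the Gaussian (quadratic-in-$\delta$) regime of the bound, where the constant $2\sqrt 2$ is comfortable; for small $r(n)$ the deviation $\delta$ is large and one must instead exploit the stronger $\delta\ln\delta$ growth of $\varphi(\delta)$; the intermediate range must be verified by splitting on the size of $\delta$. This regime-by-regime estimate of $\varphi$ is precisely the computation carried out inside the proof of Theorem 1 of \cite{raab1998balls}, which we cite rather than reproduce.
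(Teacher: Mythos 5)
Your approach coincides with the paper's: the paper offers no proof of this lemma at all---it is imported as-is with a pointer to the proof of Theorem~1 of \cite{raab1998balls}---and the upper-bound mechanism of that proof is precisely your argument, namely a union bound over the $n$ bins applied to a binomial Chernoff/Bernstein tail at threshold $\mu+\lambda$ with $\mu=m/n$ and $\lambda=\Theta(\sqrt{\mu\log n})$. Your reading of the threshold as $r(n)\log n\,\bigl(1+2\sqrt{2/r(n)}\bigr)$ is also the sensible one: the typeset $1+2\sqrt{2}/r(n)$ cannot be correct once $r(n)$ grows, since the maximum load typically deviates from $\mu$ by about $\sqrt{2\mu\log n}=\sqrt{2/r(n)}\,\mu$, which then already exceeds $2\sqrt{2}\log n$.

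The one quantitative overreach is your claim that the per-bin exponent is at least $3\log n$ \emph{uniformly} over the permitted range of $r(n)$. With the constant $2\sqrt{2}$ this does hold whenever $\lambda\le\mu$, i.e. $r(n)\ge 8$, where Bernstein gives $\lambda^2/(2\mu+\tfrac{2}{3}\lambda)\ge 3\log n$; but for small constant $r(n)$ it fails: at $r(n)=1$ one has $\delta=2\sqrt{2}$ and $\mu\,\varphi(\delta)\approx 2.31\log n$, so the union bound only yields failure probability about $n^{-1.31}$, and indeed the advertised $1-1/n^2$ guarantee is itself doubtful in that regime. So the ``regime-by-regime'' verification you defer to \cite{raab1998balls} cannot recover the stated constants for every $r(n)=O((\log n)^p)$; this is best viewed as looseness in the lemma as transcribed rather than a flaw of your method, and it is harmless for the paper's only application of the lemma, where $n=\binom{K}{g}$ and $r(n)=\Theta(\log K)$, squarely in the regime where your Gaussian-range computation is airtight.
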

      
    \begin{theorem} \label{thm:modalgperf}
          Using the randomized Algorithm \ref{alg:NewPlacement} for the placement scheme and the randomized Algorithm \ref{alg:ModifiedDelivery} for delivery, for any set of demands $\mathbf{d}$, the average peak rate, with respect to all the randomness (randomness in both delivery and placement ) is given by $\mathbb{E}^{md}_{c_{\mathrm{np}}} ({\cal R}^{md}({\cal C},\mathbf{d})) \leq \frac{4}{3}\frac{K}{g+1}(1+o(1))$ and the number of file packets needed is $F = O\left( {K \choose g } (\log ({K \choose g}))^2  \lceil N/M \rceil\right)$ when $ 2 \leq g \leq \frac{K}{3\lceil N/M\rceil}, \lceil N/M \rceil \leq \frac{K}{\frac{27}{4} \log K  }, N >K $.
    \end{theorem}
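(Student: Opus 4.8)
The plan is to unwind what Algorithm~\ref{alg:ModifiedDelivery} actually transmits and to show that the ``pull-down'' step forces essentially all coded transmissions to occur at the single level $\lvert{\cal S}\rvert=g+1$, where the relevant packet counts become binomials whose mean $F/{K\choose g}$ is large enough (for the stated $F$) to concentrate. By Theorem~\ref{equality}, the number of transmissions of Algorithm~\ref{alg:ModifiedDelivery} equals that of Algorithm~\ref{alg:OldDelivery} run on the virtual configuration $\{S'_{d_k,f}\}$ produced by the pull-down, so
\[
{\cal R}^{md}({\cal C},\mathbf{d})=\sum_{{\cal S}\neq\emptyset}\frac{\max_{k\in{\cal S}}\lvert V'_{k,{\cal S}-k}\rvert}{F},
\]
where $V'_{k,{\cal S}-k}$ collects the packets of file $d_k$ whose virtual set is exactly ${\cal S}-k$. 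Since the pull-down caps every virtual set at size $\le g$, only subsets with $\lvert{\cal S}\rvert\le g+1$ contribute, and I would split the rate into a \emph{main part} ($\lvert{\cal S}\rvert=g+1$) and a \emph{residual} ($\lvert{\cal S}\rvert\le g$).

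\emph{The residual is $o(1)$.} A subset of size $\le g$ only collects packets whose virtual set has size $<g$, and a packet has virtual set of size $<g$ exactly when its true stored set $S_{d_k,f}$ has size $<g$. Under Algorithm~\ref{alg:NewPlacement}, $\lvert S_{d_k,f}\rvert\sim\mathrm{Bi}(K,1/\lceil N/M\rceil)$ with mean $K/\lceil N/M\rceil\ge 3g$ by the hypothesis $g\le K/(3\lceil N/M\rceil)$; a Chernoff bound together with $\lceil N/M\rceil\le K/(\tfrac{27}{4}\log K)$ gives $\Pr[\lvert S_{d_k,f}\rvert<g]\le K^{-3/2}$. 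Summed over the $\le K$ demanded files and the $F$ packets, the expected number of such packets is $\le K^{-1/2}F$, so the residual contributes $\le K^{-1/2}=o(1)$ to the expected normalized rate.

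\emph{The main part.} Fix ${\cal S}$ with $\lvert{\cal S}\rvert=g+1$, $k\in{\cal S}$, and $T={\cal S}-k$. Then $\lvert V'_{k,T}\rvert=\sum_{\gamma=1}^{F'}\mathbf{1}[\text{some packet of group }\gamma\text{ of }d_k\text{ has virtual set }T]$ is a sum of $F'$ \emph{independent} Bernoullis (groups are independent under Algorithm~\ref{alg:NewPlacement}), and since the virtual sets within a group are pairwise disjoint and, by symmetry over $g$-subsets of $[1:K]$, each equals $T$ with probability $\le \lceil N/M\rceil/{K\choose g}$, we get $\lvert V'_{k,T}\rvert\preceq\mathrm{Bi}(F',\lceil N/M\rceil/{K\choose g})$ with mean $\mu:=F/{K\choose g}$. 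Now view the packets of each demanded file as balls dropped into the ${K\choose g}$ bins indexed by $g$-subsets; conditionally on which packets have true set of size $\ge g$ (the others are simply absent from this picture, which only removes balls), the group structure makes these drops uniform and negatively associated, so the maximum bin load is stochastically dominated by that of $F$ i.i.d.\ uniform balls. Choosing $F=O({K\choose g}(\log{K\choose g})^2\lceil N/M\rceil)$ makes $r:=F/({K\choose g}\log{K\choose g})=\Theta(\lceil N/M\rceil\log{K\choose g})\ge 6\sqrt2$, so Lemma~\ref{ballsbinstheorem} (or, equivalently, a Chernoff bound on $\mathrm{Bi}(F',\lceil N/M\rceil/{K\choose g})$ followed by a union bound over the $\le K{K\choose g}$ pairs (file, $g$-subset)) yields $\max_{k\in{\cal S}}\lvert V'_{k,{\cal S}-k}\rvert\le\tfrac43\mu$ simultaneously over all ${\cal S}$, with probability $1-o(1/K)$ (the per-file failure probability $\le 1/{K\choose g}^2$ survives the union over $\le K$ files since ${K\choose g}\ge{K\choose 2}$). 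On that event the main part is $\le {K\choose g+1}\cdot\tfrac43\mu/F=\tfrac43\,{K\choose g+1}/{K\choose g}=\tfrac43\tfrac{K-g}{g+1}\le\tfrac43\tfrac{K}{g+1}$; on the complement the rate is trivially $\le K$, contributing $o(1)$. Adding the residual, $\mathbb{E}^{md}_{c_{np}}[{\cal R}^{md}({\cal C},\mathbf{d})]\le\tfrac43\tfrac{K}{g+1}+o(1)=\tfrac43\tfrac{K}{g+1}(1+o(1))$, and the file-size count is exactly the claimed one.

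\emph{Main obstacle.} The delicate point is the uniform control of $\max_{k\in{\cal S}}\lvert V'_{k,{\cal S}-k}\rvert$ over all ${K\choose g+1}$ subsets with a file size only polynomial in ${K\choose g}$: this is precisely what the pull-down buys, since it concentrates all coding opportunities at level $g+1$ and turns the relevant counts into binomials of mean $\mu=F/{K\choose g}$ that can be made to concentrate, whereas without it the analogous counts have vanishing means and their maxima over $g+1$ indices blow up (the obstruction behind Theorem~\ref{thm:newpllow} and the analysis of Section~\ref{sec:filereq}). The remaining care is in the balls-into-bins reduction — within-group disjointness and possibly repeated demands are handled by stochastic domination and union bounds rather than genuine independence — and in verifying, via the three constraints on $g$ and $\lceil N/M\rceil$, that the residual and the failure-probability contributions are genuinely lower order.
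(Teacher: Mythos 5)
Your proposal is correct in substance and reaches the stated bound, but it executes the argument differently from the paper, so a comparison is worth recording. The shared skeleton is the same: a Chernoff bound on the level $\lvert S_{d_k,f}\rvert\sim\mathrm{Bi}(K,1/\lceil N/M\rceil)$ shows (using $g\le K/(3\lceil N/M\rceil)$ and $\lceil N/M\rceil\le \frac{4K}{27\log K}$) that packets below level $g$ are polynomially rare, and the pull-down makes the surviving virtual sets uniform over the ${K\choose g}$ $g$-subsets, so the whole rate is governed by the level-$(g+1)$ bin loads. The paper, however, first decomposes the packets by position $j$ within each group precisely to obtain, for each $j$, a collection of $F'$ mutually independent packets per file; it then conditions on a high-probability event, invokes the Raab--Steger balls-into-bins lemma (Lemma~\ref{ballsbinstheorem}) to get max load $\frac{m}{n}(1+O(1/\log K))$, and the factor $\frac{4}{3}$ ultimately comes from the bad event, whose probability is only about $\frac{1}{3(g+1)}$ and which is charged the trivial rate $K$. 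You avoid the $j$-decomposition entirely: you exploit within-group disjointness of the (virtual) stored sets plus cross-group independence to dominate each $\lvert V'_{k,T}\rvert$ by $\mathrm{Bi}(F',\lceil N/M\rceil/{K\choose g})$, handle the residual directly in expectation (cleaner than the paper's Markov-plus-union-bound treatment), and obtain $\frac{4}{3}$ from the Chernoff deviation allowance with a genuinely $o(1/K)$ failure probability. Your approach buys a simpler probabilistic bookkeeping (marginal tail bounds plus a union over the at most $K{K\choose g}$ (file, $g$-subset) pairs, no conditioning on an auxiliary event, repeated demands handled for free); the paper's buys the sharper $(1+O(1/\log K))$ control of the maximum load via the balls-into-bins lemma, which is what its per-$j$ independence structure exists to enable. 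One soft spot: your claim that the drops are ``negatively associated'' so the max load is dominated by i.i.d.\ balls is asserted, not proved, and would need an argument if you relied on it; but since you explicitly offer the per-bin Chernoff plus union bound as the alternative, and that route is sound (the mean $\mu=F/{K\choose g}=\Theta(\lceil N/M\rceil(\log{K\choose g})^2)$ comfortably beats the $\log(K{K\choose g})$ cost of the union bound), this is not a gap. Minor arithmetic quibbles (your $K^{-3/2}$ versus the paper's $K^{-3}$ from the Chernoff constant, and the slightly garbled remark about $1/{K\choose g}^2$ surviving the union over files) do not affect the conclusion.
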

 \begin{proof}
     According to the placement scheme given by Algorithm \ref{alg:NewPlacement}, every file is made up of $F'$ groups of file packets. Each group has size $\lceil N/M \rceil$. Let us consider the $j$-th  packet of every group. There are $F'$ such file packets. We will first analyze assuming that algorithm \ref{alg:ModifiedDelivery} uses only the $F'$ file packets formed by considering only the $j$-th packet from every group. We will finally add up the number of transmissions for every set of $F'$ packets formed using the differently numbered packet (for all $j \in [1:\lceil N/M \rceil]$) from every group. Clearly, this is suboptimal. Therefore, this upper bounds the performance of Algorithm \ref{alg:ModifiedDelivery}.
     
     Consider a file $n$. Let $G_j^n$ be the set of $F'$ packets, each of which is the $j$-th packet from every group of file $n$ according to the groups formed during placement algorithm \ref{alg:NewPlacement}. Let $S_{n,f,j} \subseteq [1:K]$ be the subset of user caches where the $f$-th packet in $G^n_j$ is stored. Here, $1 \leq f \leq F'$ indicates the position among $F'$ packets formed by taking the $j$th packet from very group. Given a user cache $k$, the placement of packets from the set $G_j^n$ are mutually independent of each other. The marginal probability of placing it is given by $ \frac{1}{\lceil N/M \rceil}$.  The placement is also independent across caches. Therefore, the number of user caches in which a particular packet in $G_j^n$ is placed is a binomial random variable $\mathrm{Bi} \left(K, \frac{1}{\lceil N/M \rceil} \right)$ where $\mathrm{Bi}\left( m,p \right)$ is a binomial distribution with $m$ independent trails each with probability $p$. Therefore, by chernoff bounds (see Pg. $276$ \cite{jukna2001extremal}), $\mathrm{Pr} \left( \lvert S_{n,f,j} \rvert < g \right) \leq \exp \left( - \frac{K}{\lceil N/M \rceil} \left(1 -\frac{g \lceil N/M\rceil}{K} \right)^2  \right) \leq \exp \left( - \frac{4K}{9\lceil N/M \rceil} \right)$. Here, we have used the fact that $g \leq \frac{K}{3 \lceil N/M\rceil}$. Therefore, for any $j$ (by Markov's Inequality), 
     
     \begin{equation} \label{eqn:highlevel}
         \mathrm{Pr} \left( \sum \limits_{f=1}^{F'} \mathbf{1}_{\lvert S_{n,f,j} \rvert < g} > 3F' (g+1)K \lceil N/M \rceil \exp \left(-\frac{4K}{9\lceil N/M\rceil}\right)  \right) \leq \frac{1}{3(g+1)K \lceil N/M \rceil}
     \end{equation}
  $\lceil N/M \rceil \leq \frac{K}{
 \frac{27}{4}\log K }$ and $g \leq \frac{K}{3\lceil N/M\rceil}$ implies the following condition (which can be verified by algebra):
  \begin{equation} \label{condn}
   (g+1)K^2 \lceil N/M\rceil <  \exp \left(\frac{4 K}{ 9 \lceil N/M \rceil }\right). 
   \end{equation}  
     If a file bit is stored in $p$ caches, then the file packet is said to be on level $p$.  This implies, that with high probability, $\left(1- 3(g+1)K \lceil N/M\rceil \exp \left(-\frac{4 K}{9 \lceil N/M\rceil}\right)\right)F'$ file packets belonging to file $n$ from $G_j^n$ is stored at a level above or equal to $g$.  We will first compute the number of transmissions due to applying Algorithm \ref{alg:ModifiedDelivery} only on the file packets in $\{{d_k,f,j}\}_{1 \leq k \leq K, f \in [1:F']}$ for a particular $j$. 
     
     We start by considering a fixed demand pattern $\mathbf{d}= \{d_1,d_2 \ldots d_K \}$. Applying union bound with (\ref{eqn:highlevel}) over at most $K$ files in the demand $\mathbf{d}$, we have:
      \begin{equation} \label{eqn:unionbnd}
         \mathrm{Pr} \left( \exists k 
         \in [1:K]: \sum \limits_{f=1}^{F'} \mathbf{1}_{\lvert S_{d_k,f,j} \rvert < g} > 3(g+1)F' K \lceil N/M \rceil \exp \left(-\frac{4K}{9\lceil N/M\rceil}\right)  \right) \leq \frac{1}{3(g+1)\lceil N/M \rceil}
      \end{equation} 
     Now, consider Algorithm \ref{alg:ModifiedDelivery}. The first few steps of the algorithm, denoted henceforth as `pull down' phase, brings every file packet stored above level $g$ to level $g$. Consider a file packet $(d_k,f,j)$ before the beginning of Algorithm \ref{alg:ModifiedDelivery}. Given that the packet $(d_k,f,j)$ is at a level above $g$, after the `pull down' phase, the probability that it occupies any of the ${K \choose g}$ subsets is equal. This is because prior to the pull down phase, the probability that the file packet being stored in a particular cache is independent and equal to $\frac{1}{\lceil N/M \rceil}$. Consider the $F'$ file packets $\{(d_k,f,j)\},~1 \leq f \leq F'$. Clearly, the probability of any one of them (say $(d_k,f,j)$) occupying a given set of $g$ caches, after the pull down phase, is independent of the occupancy of all other file packets $\{(d_k,f',j)\}_{f' \neq f}$. Let $S^a_{d_k,f,g}$ denote the occupancy after the pull down phase. Therefore after the pull down phase, which is applied only to the files in the demand vector $\mathbf{d}$, 
       \begin{equation}\label{eqn:eqoccup}
          \mathrm{Pr} \left(S_{d_k,f,j}^a = B \lvert ~~\lvert S^a_{d_k,f,j} \rvert >g, \{S^a_{d_k,f',g}\}_{f \neq f'}  \right) = \frac{1}{{K \choose g}} ,~ \forall B \subseteq {[1:K] \choose g}, ~ k \in [1:K],~ 1 \leq j \leq \lceil N/M \rceil 
       \end{equation}
     
   After the pull down phase in Algorithm \ref{alg:ModifiedDelivery},  we compute the number of transmissions of Algorithm \ref{alg:NewDelivery} using the modified $S^a_{d_k,f,j}$ after the pull down phase. It has been observed that Algorithm \ref{alg:NewDelivery} is equivalent to Algorithm \ref{alg:OldDelivery}. After the pull down phase, all the files packets are present at file level $g$ or below. Let us set $F' =  c {K \choose g} \left(\log ({K \choose g}) \right)^2$ for some constant $c>0$.  After the pull down phase, let $V_{k,{\cal S}-k}^{j}$ be the set of file packets in $G^{d_k}_j$ requested by user $k$ but stored exactly in the cache of users specified by ${\cal S}-k$. With respect to only the file packets $\bigcup \limits_{k \in [1:K]} G^{d_k}_j $, the number of transmissions of Algorithm \ref{alg:NewDelivery} is given by:
   
   \begin{align}\label{eqn:notrans}
      \mathrm{No.~of~trans}(j) & = \sum \limits_{{\cal S} \neq \emptyset} \frac{  \max \limits_{k \in {\cal S}} \lvert V_{k, {\cal S}-k}^{j} \rvert}{F'} \nonumber \\
         \hfill                               & \overset{a}= \sum \limits_{{\cal S} \neq \emptyset, \lvert {\cal S} \rvert \leq g+1} \frac{ \max \limits_{k \in {\cal S}} \lvert V_{k, {\cal S}-k}^{j} \rvert}{F'} \nonumber \\
         \hfill                               &  = \sum \limits_{\lvert {\cal S} \rvert = g+1} \frac{ \max \limits_{k \in {\cal S}} \lvert V_{k, {\cal S}-k}^{j} \rvert}{F'} +  \sum \limits_{\lvert {\cal S} \rvert \leq g} \frac{ \max \limits_{k \in {\cal S}} \lvert V_{k, {\cal S}-k}^{j} \rvert}{F'}
   \end{align}
    (a)- This is because after the pull down phase, all the relevant file packets are at a level at most $g$. Consider the event $E$ that $b=\left(1-3(g+1)K \lceil N/M \rceil \exp \left(-\frac{4K}{9\lceil N/M\rceil}\right) \right)F'$ bits of $G^{d_i}_j$ for all $i$ are stored at a level above $g$ before the beginning of Algorithm \ref{alg:ModifiedDelivery}. Conditioned on this event being true, by (\ref{eqn:eqoccup}), the pull down phase is equivalent to throwing $b$ balls independently and uniformly randomly into ${K \choose g}$ bins.  Using (\ref{condn}) and the fact that $F' = c{K \choose g} (\log ({K \choose g}))^2$, the pull down phase is akin to throwing $m= (1 - 3(g+1)K \lceil N/M \rceil \exp \left(-\frac{4K}{9\lceil N/M\rceil}\right)) F' \geq c\left(1-  \frac{3(g+1)}{K} \right) \log n  (n \log n)$ balls into $n = {K \choose g}$  bins. In fact, the $m$ balls of file $d_k$ are being thrown independently and uniformly randomly into bins satisfying ${\cal S}-k: \lvert{\cal S}\rvert =g+1,~ k \in {\cal S}$.  We apply, Lemma \ref{ballsbinstheorem} for a particular user $k$ to obtain: 
    
    \begin{align} \label{eqn:max}
    \mathrm{Pr} \left( \max \limits_{{\cal S}:\lvert {\cal S} \rvert =g+1,~ k \in {\cal S}} \frac{ \lvert V_{k,{\cal S}-k}^j \rvert }{F'} \geq \frac{\frac{m}{n}\left(1+ O \left( \frac{1}{\log K} \right) \right)}{F'} \lvert E \right) \leq \frac{1}{{K \choose g}^2}
    \end{align}
    Please note that $r(n)$ as in Lemma \ref{ballsbinstheorem} is $O(\log K)$. Now, applying a union bound over all users $k$ to (\ref{eqn:max}), we have:
    \begin{align}
     \mathrm{Pr} \left( \exists k \in [1:K]: \max \limits_{{\cal S}:\lvert {\cal S} \rvert =g+1,~ k \in {\cal S}} \frac{ \lvert V_{k,{\cal S}-k}^j \rvert }{F'} \geq \frac{\frac{m}{n}\left(1+ O \left( \frac{1}{\log K} \right) \right)}{F'} \lvert E \right) \leq \frac{K}{{K \choose g}^2}
    \end{align}
    This implies that all $V_{k,{\cal S} - k}$ are bounded in size. Therefore, we have the following:
    
   \begin{align}  \label{eqn:upperbnd1}
        1 - \frac{K}{ {K \choose g}^2} &\leq \mathrm{Pr} \left( \sum \limits_{\lvert {\cal S} \rvert =g+1} \frac{ \max \limits_{k \in {\cal S}} \lvert V_{k,{\cal S}-k}^j \rvert }{F'} \leq {K \choose g+1} \frac{\frac{m}{n}\left(1+ O \left( \frac{1}{\log K} \right) \right)}{F'} \lvert E \right) \nonumber \\
        \hfill &\overset{a}= \mathrm{Pr} \left(\sum \limits_{\lvert {\cal S} \rvert =g+1} \frac{ \max \limits_{k \in {\cal S}}  \lvert V_{k,{\cal S}-k}^j \rvert }{F'} \leq \frac{K-g}{g+1} \left(1+O \left( \frac{1}{\log K} \right)\right) \lvert E \right) . 
   \end{align}
   (a) is because: $1 \geq \frac{m}{F'} \geq 1- \frac{3(g+1)}{K}$ implying $\frac{m}{F'} (1+ \frac{1}{O(\log K)}) = (1+\frac{1}{O(\log K)})$.
    Putting together (\ref{eqn:upperbnd1}), (\ref{eqn:notrans}) and (\ref{eqn:unionbnd}), we have:
      \begin {equation}
        \mathrm{Pr} \left( \mathrm{No.~of~trans}(j) \leq \frac{K-g}{g+1}\left(1+ O \left( \frac{1}{\log K} \right) \right) + 2K^2 \lceil \frac{N}{M} \rceil e^{ -\frac{4K}{9\lceil N/M\rceil}} \right) \geq \left(1-\frac{1}{2(g+1) \lceil N/M \rceil} \right) \left(1 - \frac{1}{ {K \choose g}^2} \right)
      \end{equation}
     Union bounding over all $1 \leq j \leq \lceil N/M \rceil$, we have:
      \begin{equation}
          \mathrm{Pr} \left( \exists j: \mathrm{No.~of~trans}(j) > \frac{K-g}{g+1}\left(1+ O \left( \frac{1}{\log K} \right) \right) + 3(g+1)K^2 \lceil \frac{N}{M} \rceil e^{ -\frac{4K}{9\lceil N/M\rceil}} \right) \leq \frac{1}{3(g+1)}+ \frac{\lceil N/M \rceil}{ {K \choose g}^2} 
      \end{equation}
   From (\ref{condn}), we have $3(g+1)K^2 \lceil \frac{N}{M} \rceil e^{ -\frac{4K}{9\lceil N/M\rceil}} < 3$. Now combining transmissions for different $j$ and normalizing by $\lceil N/M \rceil$, we have:
   \begin{equation} \label{eqn:highprobresult}  
     \mathrm{Pr}^{md}_{c_{\mathrm{np}}} \left( R^{md} \left( \cal C, \mathbf{d} \right) > \frac{K-g}{g+1} (1+o(1)) \right) \leq \frac{1}{3(g+1)} + \frac{\lceil N/M \rceil}{ {K \choose g}^2} =\frac{1}{3(g+1)}+O(1/K)
   \end{equation}   
  
%   We have shown that (Theorem \ref{Thm:Concrate1}) for any clique cover based delivery scheme, with respect to the new placement and for any given demand pattern, the random number of transmissions can be at most $(1+\epsilon)$ times the average with probability $1-o(1)$. This applies to the algorithm as this is based on clique cover too. Here, $F = \Theta \left({K \choose g} \lceil N/M \rceil (\log {K \choose g} )^2 \right) $. When $g>2$, it satisfies the requirement of $F= \Omega(K \lceil N/M \rceil^2 \log K \frac{1}{\epsilon^2})$ with $\epsilon= o(1)$ for concentration results to hold. Therefore, (\ref{eqn:highprobresult}) together with the concentration result yields:
  In the above bad event, the number of  transmissions (normalized) needed is at most $K$. Therefore, we have:
    \begin{equation}
          \mathbb{E}^{md}_{c_{\mathrm{np}}} \left[ R^{md} \left( {\cal C}, \mathbf{d} \right) \right] \leq \frac{K-g}{g+1} (1+o(1)) (1 -  \frac{1}{3(g+1)}-O(1/K)) +  \left(\frac{1}{3(g+1)}+O(1/K) \right) K \leq \frac{4}{3} \frac{K}{g+1} (1+o(1))                                                                                                                                                                
     \end{equation}    
 \end{proof}

  \subsection{Grouping into smaller user groups: approximately achieving the lower bound}    
       We now propose a user grouping scheme similar to the one for the deterministic caching scheme which can achieve the same average number of transmissions as the scheme mentioned in the previous section but with improved file size requirement almost matching the lower bound.
       
       We group users in groups of size $K'= \lceil N/M\rceil 3g (\log (N/M))$ and apply the new placement scheme (Algorithm \ref{alg:NewPlacement}) and delivery scheme of Algorithm \ref{alg:ModifiedDelivery} to each of the user groups. It can be seen that $K'$ satisfies the conditions: $ e \leq \lceil N/M \rceil \leq \frac{K'}{\frac{27}{4} \log K' }$ and $ 7 \leq g \leq \min \{ \frac{K'}{3 \lceil N/M \rceil}, \frac{\left(\frac{N}{M} \right)^2}{3 \log \left( N/M \right)} \}$. Therefore, Theorem \ref{thm:modalgperf} is applicable. For every group, the average number of transmissions for a particular demand configuration is at most $ \frac{4}{3}\frac{K'}{g+1} \left(1+o(1) \right)$. Adding over all groups, we have the following theorem:       
   
       \begin{theorem}
              Let the placement scheme be that of Algorithm \ref{alg:NewPlacement}. For any target gain $7 \leq g \leq \frac{\left(\frac{N}{M} \right)^2}{3 \log \left( N/M \right)}$ and $\lceil N/M \rceil \geq  e  $, let the number of users in the system be such that $K$ is a large multiple of $\lceil \frac{N}{M} \rceil 3g \log (N/M)$. Consider the case when users are divided into groups of size $K'= \lceil N/M\rceil 3g \log (N/M)$ and delivery scheme of Algorithm \ref{alg:ModifiedDelivery} is applied to each user group separately.  For any demand pattern, the expected total number of transmission required for all users is at most $ \frac{4}{3}\frac{K}{g+1} \left(1+o(1)\right)$. The file size needed is $F= O( {K' \choose g} (\log ({K' \choose g}))^2 \lceil N/M\rceil)  \approx O( \left(\frac{N}{M}\right)^{g+1} \left(3e \right)^g (\log (N/M))^{g+2} g^2 )$.
       \end{theorem}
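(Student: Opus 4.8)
The plan is to derive this statement directly from Theorem~\ref{thm:modalgperf} by applying that result \emph{separately and with independent randomness} to each of the $K/K'$ user groups of size $K'=\lceil N/M\rceil\,3g\log(N/M)$, and then summing the per-group guarantees. Concretely, I would proceed in four steps, of which only the first involves any real work.

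\textbf{Step 1 (checking the numerical window).} I would first verify that a single group of $K'$ users, with the same library of $N$ files and cache size $M$, meets the hypotheses of Theorem~\ref{thm:modalgperf} with the role of ``$K$'' played by $K'$: namely $2\le g\le \frac{K'}{3\lceil N/M\rceil}$ and $\lceil N/M\rceil\le \frac{K'}{\frac{27}{4}\log K'}$ (and $N>K'$, which holds since $K'\le K$). The first condition is immediate, because $\frac{K'}{3\lceil N/M\rceil}=g\log(N/M)\ge g$ as $\lceil N/M\rceil\ge e$ forces $\log(N/M)\ge 1$, and $g\ge 7>2$. For the second, substituting $K'$ reduces the inequality to $12\,g\log(N/M)\ge 27\log K'$; the point is that the upper restriction $g\le\frac{(N/M)^2}{3\log(N/M)}$ forces $3g\log(N/M)\le(N/M)^2$, hence $\log K'=\log\!\big(3g\lceil N/M\rceil\log(N/M)\big)=O(\log(N/M))$, so the inequality becomes a comparison of absolute constants against $\log(N/M)$ that holds throughout the stated range. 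This bookkeeping is elementary but slightly delicate — it is precisely what pins down the unusual choice $K'=3g\lceil N/M\rceil\log(N/M)$ and the constraint on $g$ — and I expect it to be the main obstacle; everything after it is routine.

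\textbf{Step 2 (per-group rate and file size).} Granted Step~1, Theorem~\ref{thm:modalgperf} applies to each group: running Algorithm~\ref{alg:NewPlacement} and Algorithm~\ref{alg:ModifiedDelivery} on a group of $K'$ users, for \emph{any} demand sub-pattern on that group, the expected number of normalized transmissions (expectation over the random placement in that group and the random pull-down phase) is at most $\frac{4}{3}\frac{K'}{g+1}(1+o(1))$, and it suffices to split each file into $F=O\!\big({K'\choose g}(\log{K'\choose g})^2\lceil N/M\rceil\big)$ packets. Since all groups are identical in size and parameters, the same $F$ is admissible for every group, so this $F$ suffices globally.

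\textbf{Step 3 (aggregation).} The placements and deliveries in distinct groups use independent randomness and their transmissions are merely concatenated, so by linearity of expectation the expected total number of normalized transmissions over all $K/K'$ groups is at most
\begin{equation}
\frac{K}{K'}\cdot\frac{4}{3}\frac{K'}{g+1}(1+o(1))=\frac{4}{3}\frac{K}{g+1}(1+o(1)),
\end{equation}
the $o(1)$ being a function of the common group parameters only, hence uniform across groups and vanishing as $N/M\to\infty$. \textbf{Step 4 (file-size estimate).} Finally I would expand the bound of Step~2 using ${K'\choose g}\le(eK'/g)^g=(3e\lceil N/M\rceil\log(N/M))^g$ together with $\log{K'\choose g}=O(g\log K')=O(g\log(N/M))$ (reusing $\log K'=O(\log(N/M))$ from Step~1), which gives
\begin{equation}
F=O\!\left(\left(\frac{N}{M}\right)^{g+1}(3e)^g\big(\log(N/M)\big)^{g+2}g^2\right),
\end{equation}
matching the claimed expression.
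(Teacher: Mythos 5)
Your proposal follows essentially the same route as the paper: the paper's own argument is precisely to check that $K'=\lceil N/M\rceil\,3g\log(N/M)$ satisfies the hypotheses of Theorem~\ref{thm:modalgperf} (with $K'$ playing the role of $K$), apply that theorem to each group and its demand sub-pattern, add the per-group bounds over the $K/K'$ groups, and expand ${K'\choose g}\le (eK'/g)^g$ to get the stated file size. Your Step~1 is in fact more explicit than the paper's one-line ``it can be seen'' verification (both treatments are equally loose about the absolute constants near the boundary $N/M\approx e$, $g=7$), but the decomposition, the per-group invocation, and the aggregation by linearity are identical.
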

       
    \textbf{Note:} The constant $e$ in the above requirement for file size comes due to bounding ${n \choose k}$ by $\left(\frac{ne}{k} \right)^k$. Other constants in the derivation can be relaxed if (\ref{eqn:highlevel}) can be strengthened which we do not do here. If $N/M = \Theta(K^{\delta})$ for some $0< \delta <1$ and $K$ large, then for a constant gain $g$, the above result requires $O \left(K^{\delta (g+1)} \right)$ packets whereas the previous best known uncoordinated random caching schemes require a file size of  $\Omega(\exp (K^{1-\delta}))$ for obtaining a gain of $2$.
%     $\sum \limits_{{\cal S}}\Delta ({\cal S})$ is the deviation from $R_p(M) \lceil N/M\rceil$ and is bounded above as in (\ref{eqn:bound}). Therefore for larger $F'$, the deviation from $R_p(M)$ will be very small (say at most $R_p(M)$) when $F'$ is polynomial in $K,N$ and $M$ if the conjecture is true.
 %This conjecture needs to be shown to prove that implementation complexity for the new algorithms is small.
   
  \section{Conclusion}
       We have analyzed random uncoordinated placement schemes along with clique cover based coded delivery schemes in the finite length regime for the caching-aided coded multicasting problem (or coded caching problem). This problem involves designing caches at user devices offline and optimizing broadcast transmissions when requests arise from a known library of popular files for worst case demand. The previous order optimal results on the number of broadcast transmissions for any demand pattern assumed that the number of packets per file is very large (tending to infinity). We showed that existing random placement and coded delivery schemes for achieving order optimal peak broadcast rate do not give any gain even when you have exponential number of packets. Further, we showed that to get a multiplicative gain of $g$ over the naive scheme of transmitting all packets, one needs $O((N/M)^g)$ packets per file for any clique cover based scheme where $N$ and $M$ are the library size and cache memory size respectively. We also provide an improved random placement and delivery scheme, that achieve this lower bound approximately.
       
     Future interesting research directions, to go beyond the bounds derived in this paper, may include designing improved deterministic caching schemes. This leads to several interesting research questions on designing very efficient coordinated deterministic placement schemes that go beyond the current ones and possible use of interference alignment inspired delivery schemes (instead of simple clique cover based delivery) that optimize the file size.
 
 \section*{Acknowledgement}
     We would like to thank MingFai Wong for many helpful discussions and in particular for his help with Theorem \ref{thm:concold}. %Alexandros G. Dimakis would like to acknowledge the support from grants: NSF CCF 1344179, 1344364, 1407278, 1422549 and a ARO YIP award (W911NF-14-1-0258). 
    
\pagenumbering{arabic}
\bibliographystyle{IEEEtran}
\bibliography{Effbib}

\end{document}